\newcommand{\ipic}[3][-0.5]{\raisebox{#1\height}{\scalebox{#3}{\includegraphics{#2}}}}
\newcommand{\globalH}{K}
\newcommand{\SD}{\downarrow}
\newcommand{\SU}{\uparrow}
\newcommand{\ZZ}{\mathbb{Z}}
\date{\today}
\newcommand{\var}{\text{Var}}
        \newcommand{\uPEPS}{\Psi}
\newtheorem{theorem}{Theorem}
\newtheorem{observation}{Observation}
\newtheorem{definition}{Definition}
\newtheorem{lemma}{Lemma}
\begin{document}

\title{Theory on variational high-dimensional tensor networks}

\author{Zidu Liu}\thanks{These authors contributed equally to this work.}

\author{Qi Ye}\thanks{These authors contributed equally to this work.}
 \affiliation{Center for Quantum Information, IIIS, Tsinghua University, Beijing 100084, People's Republic of China}
 
\author{Li-Wei Yu}
 \affiliation{Theoretical Physics Division, Chern Institute of Mathematics and LPMC, Nankai University, Tianjin 300071, P. R. China}
 \affiliation{Center for Quantum Information, IIIS, Tsinghua University, Beijing 100084, People's Republic of China}
 
 \author{L.-M. Duan}\email{lmduan@tsinghua.edu.cn}
\affiliation{Center for Quantum Information, IIIS, Tsinghua University, Beijing 100084, People's Republic of China}
\affiliation{Hefei National Laboratory, Hefei 230088, People's Republic of China}
\author{Dong-Ling Deng}
\email{dldeng@tsinghua.edu.cn}
 \affiliation{Center for Quantum Information, IIIS, Tsinghua University, Beijing 100084, People's Republic of China}
\affiliation{Shanghai Qi Zhi Institute, 41th Floor, AI Tower, No. 701 Yunjin Road, Xuhui District, Shanghai 200232, China}
\affiliation{Hefei National Laboratory, Hefei 230088, People's Republic of China}

\begin{abstract}
Tensor network methods are powerful tools for studying quantum many-body systems. In this paper, we investigate the emergent statistical properties of random high-dimensional tensor-network states and the trainability of variational tensor networks. We utilize diagrammatic methods and map our problems to the calculations of different partition functions for  high-dimensional Ising models with special structures. To address the notorious difficulty in cracking these models, we develop a combinatorial method based on solving the ``puzzle of polyominoes". With this method, we are able to rigorously study statistical properties of the high dimensional random tensor networks. We prove: (a)
the entanglement entropy approaches the maximal volume law, except for a small probability that is bounded by an inverse polynomial of the bond dimension; 
(b) the typicality occurs for the expectation value of a local observable when the bond dimension increases. In addition, we investigate the barren plateaus (i.e., exponentially vanishing gradients) for the high-dimensional tensor network models. We prove that such variational models suffer from barren plateaus  for global loss functions, rendering their training processes inefficient in general. Whereas, for local loss functions, we prove that the gradient is independent of the system size (thus no barren plateau occurs), but decays exponentially with the distance between the region where the local observable acts and the site that hosts the derivative parameter. Our results uncover in a rigorous fashion some fundamental properties for variational high-dimensional tensor networks, which paves a way for their future theoretical studies and practical applications. 
\end{abstract}
\date{\today}
\maketitle
\section{Introduction}\label{sec1}

 Tensor networks are promising candidates for solving quantum many-body problems~\cite{verstraete2008Matrix,Cirac2020Matrix, Eisert2010Colloquium}. Their one-dimensional version-the matrix product states (MPS)-have been proven to be efficient ansatzes in representing the ground state wave-functions of local gapped Hamiltonians in one dimension \cite{Hastings2007Area}. MPS enjoys powerful optimization algorithms including density matrix renormalization group \cite{White1992Density}, time evolving block decimation \cite{Vidal2003Efficient}, and the time-dependent variational principle \cite{Haegeman2016unifying}. Tensor networks for systems in higher dimensions, such as the projected entangled-pair states (PEPS) \cite{Verstraete2004Renormalization}, also play a crucial role in solving quantum many-body problems. However, contracting a generic PEPS is proven to be a $\#$P-hard problem \cite{schuch2007computational}. This makes the computation of expectation values of physical observables, which are crucial in solving quantum many-body problems, unattainable on classical computers for large system sizes. Despite the challenge in optimizing high-dimensional tensor networks, many approximation numerical methods have been proposed~\cite{Maeshima2001Vertical,Verstraete2004Renormalization,Levin2007tensor,jordan2008classical,jiang2008accurate,xie2009second,xie2012coarse,corboz2016variational,Vanderstraeten2016gradient,fishman2018faster,corboz2013tensor} and intriguing progresses have been made along this direction. In addition, high-dimensional tensor networks are also powerful tools in building profound connections between the anti-de Sitter/conformal field theory (AdS/CFT) correspondence in high-energy physics~\cite{maldacena2003eternal,ryu2006holographic,ryu2006aspects}
  and the concept of quantum error correction~\cite{Nielsen2010Quantum} in the context of quantum computing. In particular, the entanglement structure of the CFT in the AdS/CFT correspondence can be mapped onto a tensor network representation, which can in turn be used to construct quantum error-correcting codes~\cite{pastawski2015holographic, zhao2016Bidirectional, hayden2016holographic, cao2018bulk, donnelly2017living}. Recently, high-dimensional tensor networks have also been used in the machine learning domain to model complex data structures, such as data classification and generating~\cite{liu2023tensor, cheng2021supervised, vieijra2022generative}. By constructing a tensor network that captures the underlying structure of the data, it is possible to perform efficient computations in these tasks.

 Besides of the successful applications of tensor network methods, the studies of random tensor network states offer a promising avenue for the understanding of quantum statistical mechanics. The random matrix product state, a one-dimensional version of the random tensor network state, has been extensively studied to explore the statistical properties of many-body systems. For instance, it has been used to investigate the typicality of global and local observables~\cite{Silvano2010Statistical, Silvano2010Typicality,Haferkamp2021Emergent}, the decay of the correlations~\cite{svetlichnyy2022matrix,lancien2022correlation}, entanglement entropy~\cite{collins2012matrix,Haferkamp2021Emergent,gonzalez2018spectral}, and nonstabilizerness~\cite{haug2023quantifying, chen2022magic}. Random matrix product states can also be treated as ground states of disorder parent Hamiltonians~\cite{Movassagh2017Generic, jauslin2022random, lemm2019gaplessness} and typical representations of one-dimensional quantum phases of matter~\cite{schuch2011classifying}. Recently, extensive attentions have been attracted by the barren plateau problem, which refers to the vanishing of gradients in training variational ansatzes~\cite{Mcclean2018Barren,Cerezo2021Higher,Sharma2020Trainability, Arrasmith2020Effect,Zhao2021Analyzing, Pesah2020Absence,Marrero2020Entanglement,Patti2020Entanglement,Holmes2021Barren,Uvarov2021Barren, martin2022barren}. Such a problem for machine learning models based on random matrix product states has been systematically studied~\cite{liu2022presence, Garcia2023barren}. 
 However, extending these analyses to high-dimensional cases is notably more challenging, as the problems related to random high-dimensional tensor network states involve solving statistical models that are highly intricate.  Despite these challenges, the investigation of high-dimensional random tensor network states holds great promise for the applications of tensor network models as a variational model and for advancing the understanding of random quantum many-body systems.

In this paper, we study the emergent statistical properties of high-dimensional random tensor-network states and the trainability of such variational ansatzes. In particular, we focus on the random tensor-network states in an isometric form, which is a generalization of the canonical form for the matrix product states to higher dimensions \cite{Zaletel2020isometric}. Such models feature an advantage of enabling efficient numerical methods for contraction~\cite{lin2021efficient, tomohiro2020isometric,tepaske2021three,Zaletel2020isometric,Haghshenas2022variational} and are related to a set of tensor-network states that can be experimentally prepared in an efficient way~\cite{wei2022generation,wei2022sequential}. We rigorously analyze the statistical properties of such random high-dimensional tensor-network states by mapping the unitary integration into calculating the partition functions of classical Ising models in high dimensions. To handle the notorious difficulty in computing such partition functions, we propose a powerful excited-spin-string (ESS) method, which is derived from counting the number of polyominoes in a high-dimensional lattice \cite{golomb1996polyominoes}. With this method, we study the statistical properties of the random tensor-network states in a rigorous fashion, including their normalization factor, entanglement entropy, and typicality. In addition, we explore the landscapes of different types of loss functions for variational models based on tensor networks. For global loss functions, we prove that they suffer from barren plateaus in general. Whereas, for local loss functions, we prove that there is no barren plateau and the gradient decays exponentially with the distance between the region where the local observable acts and the site that hosts the derivative parameters.    
We also perform numerical simulations for the general 2D tensor networks to support our theoretical results. Our results establish a generic theory for variational high-dimensional tensor networks in a rigorous fashion, which would provide a valuable guide for both practical applications and future theoretical studies.


\section{Notation and Preliminary}
\label{Notation_und_Preliminary}
\begin{figure}
    \centering
    \includegraphics[scale=0.25]{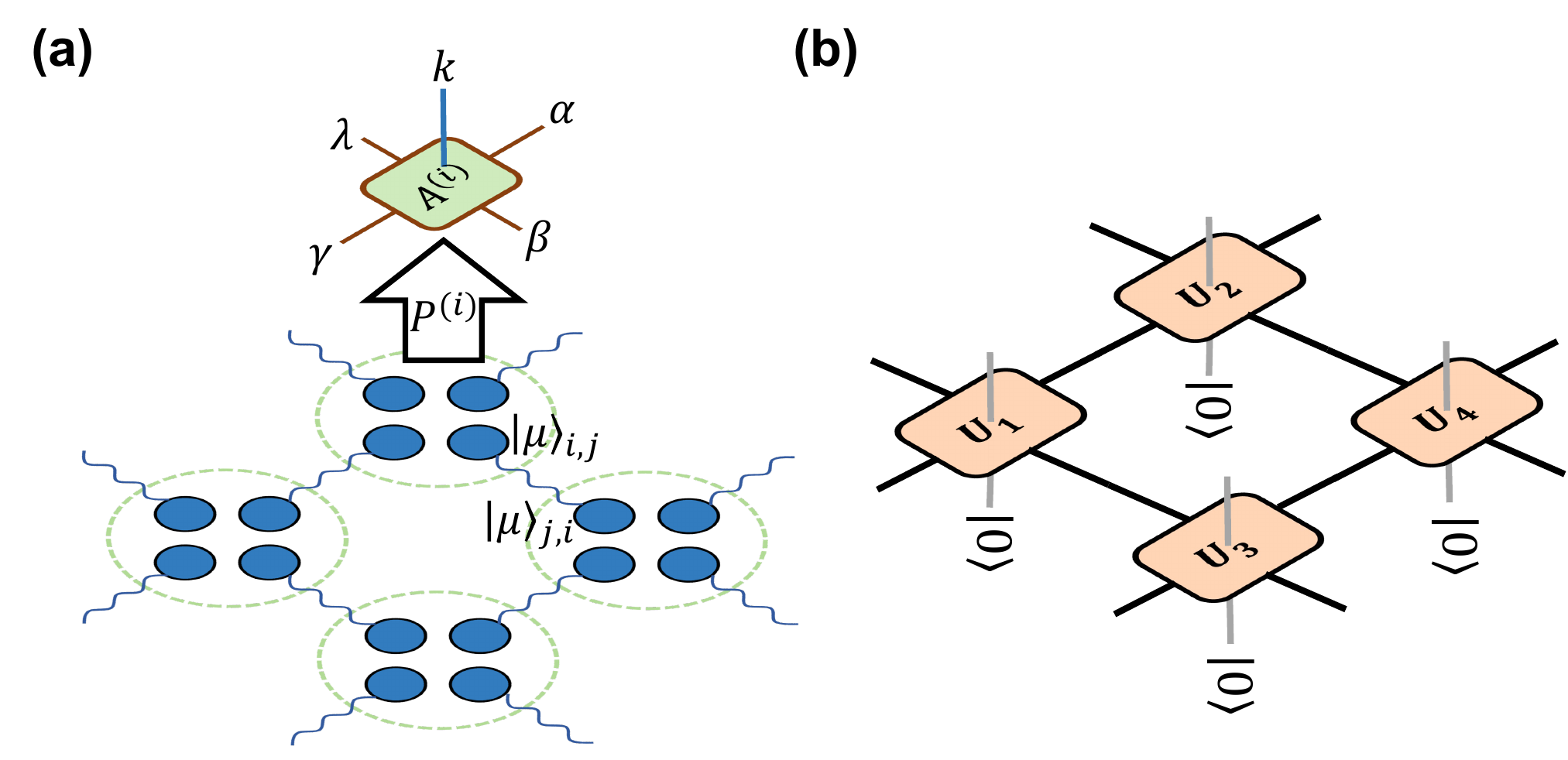}
    \caption{{Illustration of the high-dimensional random tensor-network states.} (a) Construction of a projected entangled-pair state on a 2D lattice.  The blue circles indicate virtual qubits, which are connected by solid wires to form a maximally entangled state with virtual bond dimension $D$. A projector $P$ map the virtual qubits into a physical qubit with dimension $d$. (b) Illustration of the unitarily embedded high-dimensional random tensor-network states, where each $D^2d \times D^2d$ unitary $U_i$ is randomly draw from Haar measure and applied to the reference state $|0\rangle$.}
    \label{illustration_peps}
    \end{figure}
 We consider an $L_1\times L_2$ rectangular lattice with the vertices $V = (1,2,\dots,  L_1\times L_2)$ and a set of edges $E$  connecting these vertices. As shown in Fig.~\ref{illustration_peps} (a), to construct a PEPS  $|\Psi\rangle$, we first assign each vertex $i$ with several auxiliary spins (the blue circles), where each auxiliary spin $|\mu\rangle_{i,j}$ in the $i$-th site is connected with its neighboring spin $|\mu\rangle_{j,i}$ in a maximal entanglement state by a virtual bond $e$. More explicitly, the state of these ancillary spins is given by $|\phi \rangle = \bigotimes_{e\in E} \sum_{\mu=1}^{D} |\mu\rangle_{i,j} \otimes |\mu\rangle_{j,i}$, where $D$ denotes the virtual bond dimension. For each site $i$, we assign a linear map $P^{(i)} = \sum_{\alpha, \beta, \gamma, \lambda = 1}^{D}\sum_{j=1}^{d}A_{\alpha, \beta, \gamma, \lambda, j}^{(i)}|j\rangle \langle \alpha, \beta, \gamma, \lambda |$, which maps the entangled pairs of these ancillary spins into the physical spins with the physical dimension denoted by $d$.  With this framework, the PEPS reads: $|\Psi\rangle = \bigotimes_{i} P^{(i)} |\phi\rangle $.

In our setup, each projector $P^{(i)}$ can be unitarily embedded into a $D^2 d \times D^2 d$ unitary matrix $U_i$ [as shown in the Fig. ~\ref{illustration_peps} (b)]. One can obtain the $D^4 \times d$ local tensor $A_{\alpha, \beta, \gamma, \lambda, j}^{(i)}$ by contracting with a constant reference state $|0\rangle$ on a physical leg. These local tensors are drawn uniformly from the Stiefel manifold of isometries and approximately form the unitary $t$-design. 

The integral over the $t$-th moment of a Haar-random unitary is given by the Weingarten calculus~\cite{brouwer1996diagrammatic, collins2006integration}. In the following, we utilize diagrammatic languages to describe the Haar-random integration over each local tensor, which maps the integration results into a spin representation.
More specifically, for the first moment of a Haar-random unitary $\overline{U\otimes U^\dagger}$, we have:
\begin{equation}
      \ipic{1_design_spin}{0.3},
\end{equation}
where $\text{Dim}(U)$ indicates the corresponding dimension of the unitary tensor, $\bar{U}$ is the complex conjudgate of $U$, and the spin variable takes the connection:
\begin{equation}
    \ipic{spin_down_connections_0}{0.3} = \ipic{one_design_connections}{0.3}.
\end{equation}
The weight factor $1/\text{Dim}(U)$ is given by the Weingarten calculus. Indeed, each spin can only take one value (we denote as a thick down arrow), which corresponds to a spin-$0$ model. The second moment integral over Haar measure $\overline{U\otimes U^\dagger \otimes U \otimes U^\dagger}$ has the following spin representation:
\begin{equation}
          \ipic{two_design_spin}{0.3},
\end{equation}
and we denote the spin states $\uparrow/\downarrow$ as:
\begin{equation}
    \ipic{spin_down_connections}{0.3} = \ipic{2_design_down}{0.3} , \qquad
    \ipic{spin_up_connections}{0.3} = \ipic{two_deisng_down}{0.3},
\end{equation}
with different virtual weight given by the Weingarten function:
\begin{align}
    \ipic{down_down_virtual}{0.25} &=\ipic{up_up_virtual}{0.25} = \frac{1}{\text{Dim}(U)^2- 1},\\
    \ipic{down_up_virtual}{0.25} &=\ipic{up_down_virtual}{0.25} = -\frac{1}{[\text{Dim}(U)^2 - 1][\text{Dim}(U)]}.
\end{align}
Here the Haar integral of each local unitary is mapped to a spin-$1/2$ model. 

We can also apply this kind of representation on the random tensor network states. For each local unitary tensor the second moment integral is:
\begin{equation}
\begin{aligned}
    \label{eq_twodesign}
     \ipic{isometric_tensor_haar_2_design}{0.3},
    \end{aligned}
\end{equation} 
where we define the fourfold reference state as $|\vec{0}\rangle = |0\rangle^{\otimes 4}$. 
 We notice that the contraction between both spin up and spin down state and the fourfold state $|0\rangle^{\otimes 4}$ have $\langle \uparrow/\downarrow||0\rangle^{\otimes 4} = 1$. Hence, we can ignore the reference state in the following calculations. 
When we calculate the second moment integral over Haar measure for a random tensor network with a size of $L \times L$, denoted by $\int dU_H \left(|\Psi\rangle\langle \Psi|\right)^{\otimes 2}$, it gives an Ising-like spin model. The specific manner in which the dangling legs (i.e., those not connected to any other vertex) are connected will be determined later for particular problems.

 In the above representation, the calculation of such integral is equivalent to solve the partition function of the spin models:
 \begin{equation}
     Z = \sum_{\vec{\sigma}} e^{-H({\vec{\sigma}, h_z})},
 \end{equation}
where the external field $h_z$ depends on the tensors connected to the dangling physical dimension, $\vec{\sigma}$ denotes the spin configuration, $H$ corresponds to the energy of a specific configuration $\vec{\sigma}$.

\section{Statistical properties of the random high-dimensional tensor-network states}
\label{Statistical_properties}
\begin{figure*}
\centering
\includegraphics[scale=0.25]{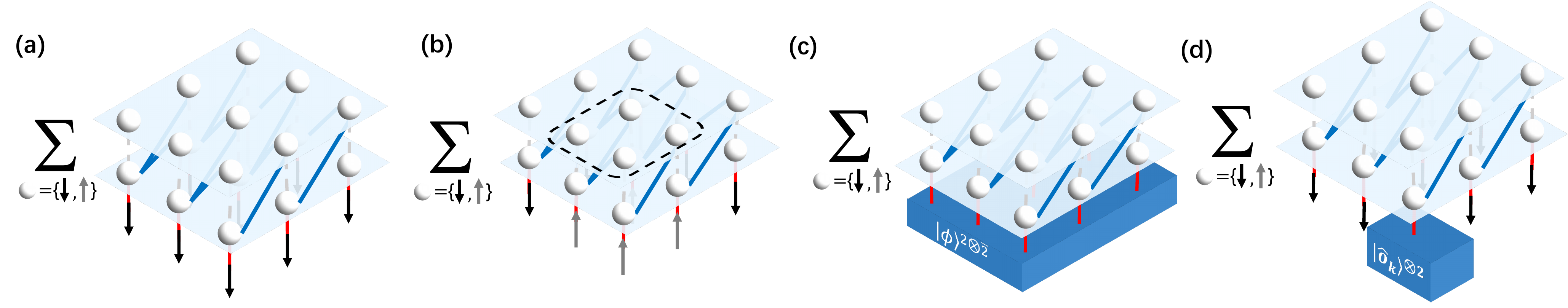}
\caption{{Spin representation of the Haar random integration over the random tensor-network states.}  (a)  A graphic representation of the second moment integral of the overlap function $\overline{\langle \Psi |\Psi\rangle^2} = \int dU_H \langle \psi | \psi \rangle^2$ over Haar measure $dU_H$, where each dangling physical leg is connected to a spin down state. This corresponds to a classical Ising model with uniform external fields. (b) A graphic representation of the averaged purity $\overline{\text{Tr}(\rho_A^2)}$ of a subsystem $A$ (the dotted region).  (c) A graphic representation of the integration over the second moment global loss function $\overline{|\langle \Psi|\phi\rangle \langle \phi |\Psi\rangle|^2}$, where the dangling physical legs are connected to a fourfold global state in a choi-Jamiolkowski isomorphism $|\phi\rangle^{2\otimes \bar{2}} = (|\phi\rangle\langle \phi |)^{\otimes 2}$ \cite{jamiolkowski1972linear,choi1975completely}. (d) A graphic representation of $\overline{|\langle \Psi|\hat{O}_k|\Psi\rangle|^2}$, where we denote $|\hat{O}_k\rangle^{\otimes2} = \hat{O}_k\otimes \hat{O}_k$. }
\label{illustration_Methods}
\end{figure*}

\subsection{Fluctuation of the norm}
\label{norm_fluctuation_title}
We start by discussing the computation of the normalization factor and the fluctuation of the random high-dimensional tensor-network states. In our setting, the wave function of such a tensor-network state is not necessarily normalized. Using the concentration of measure for the unitary group~\cite{hayden2004randomizing,Gross2009Most,Bremner2009random}, we prove that the norm of each state vector $|\Psi \rangle$ is exponentially concentrated around unity. This part of the proof not only demonstrates that the influence of the norm factor is negligible in subsequent calculations when the system is large, but also serves as a good example to systematically introduce our technique framework based on enumerating polyominoes.

\begin{theorem}\label{theorem:normalized} 
Consider a random tensor-network state $|\Psi\rangle$ with  bond dimension $D\leq 2$ and  physical dimension $d\leq 2$ in a square lattice with volume $V = L \times L$, where $L \geq 2$. The norm of the random tensor-network state is exponentially concentrated around one as $L$ increases:
\begin{equation}
    \text{Pr}\left(|\langle \Psi |\Psi \rangle-1|\geq\epsilon\right)\leq O\left(\frac{\kappa^{L}}{\epsilon^2}\right),
\end{equation}
with $\kappa = 0.7$ for a $2$D random tensor-network state. 
\end{theorem}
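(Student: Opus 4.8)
The plan is to apply Chebyshev's inequality to the random variable $X = \langle \Psi | \Psi \rangle$. Writing $\mu = \mathbb{E}[X]$ and using $\text{Pr}(|X-\mu|\geq\epsilon)\leq \var(X)/\epsilon^2$, the theorem reduces to two moment computations: showing that the first moment equals one, and that the variance is bounded by $O(\kappa^{\min(L_1,L_2)})$ with $\kappa=0.97$. The $1/\epsilon^2$ dependence in the claimed bound is exactly the signature of Chebyshev, so this is the natural route rather than a sharper Levy/Lipschitz concentration argument.

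First I would compute the first moment $\mathbb{E}[\langle\Psi|\Psi\rangle]$. Here $\langle\Psi|\Psi\rangle$ is linear in each $U_i$ and its conjugate, so it suffices to use the single-moment Haar average $\int dU_H\, U\otimes\bar U$ at each site. Because every local tensor is an isometry obtained from $U_i$ by fixing the reference state $|0\rangle$ on the physical input leg, these averages contract through the entangled-bond structure to give exactly $\mathbb{E}[\langle\Psi|\Psi\rangle]=1$; I would verify this by feeding the single-site identity through the diagram of Fig.~\ref{illustration_Methods}(a) and checking that the maximally-entangled bonds together with the dangling spin-down boundary close up to unity. This pins $\mu=1$.

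The core of the proof is the variance $\var(X)=\mathbb{E}[\langle\Psi|\Psi\rangle^2]-1$. I would evaluate the second moment using the spin representation of Eq.~\eqref{eq_twodesign}, which turns $\overline{\langle\Psi|\Psi\rangle^2}$ into the partition function $\bar Z=\sum_\sigma e^{-H[\sigma,h_z]}$ of a two-dimensional Ising model on the lattice, with one $S_2$-valued (hence $\pm1$) spin per site, nearest-neighbour weights as given in the excerpt, and a uniform external field $h_z$ fixed by the spin-down boundary condition. The key structural observation is that the fully polarized (ferromagnetic) configuration reproduces exactly the disconnected value $\mu^2=1$; subtracting it, the variance becomes the sum over all nontrivial configurations, i.e.\ over clusters of spins flipped relative to the polarized background. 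This is where the excited-spin-string (ESS)/polyomino method enters: each flipped cluster is a polyomino whose Boltzmann weight decays geometrically in its size/perimeter through the small bond weights $\tfrac{1}{D^2d-1}$ and $-\tfrac{1}{(D^2d-1)D^2d}$. The constraint $L_1\geq2,\,L_2\geq2$ guarantees that the minimal nontrivial excitation actually exists.

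The hard part will be the combinatorial bound on this cluster sum. The cheapest nontrivial excitation is a single domain wall that must wind across the shorter linear dimension, contributing weight $\sim w^{\min(L_1,L_2)}$ for an effective per-step suppression $w$; every other excitation is heavier. To control the full sum I would bound the number of polyominoes of a given size by the Klarner growth constant (about $4.06$ in 2D) and show that the per-cell Boltzmann weight is small enough that the resulting weighted series converges, the geometric factor reorganizing into a single suppression constant $\kappa$. The delicate quantitative step is tracking the signs and the exact prefactors so that (i) the ferromagnetic term cancels $\mu^2$ cleanly and (ii) the summed contribution of all polyominoes is genuinely bounded by $O(\kappa^{\min(L_1,L_2)})$ with the stated $\kappa=0.97$; this requires the polyomino growth to be outcompeted by the weight decay, which is precisely the content of the ESS counting. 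Plugging $\var(X)=O(\kappa^{\min(L_1,L_2)})$ and $\mu=1$ back into Chebyshev then yields the theorem.
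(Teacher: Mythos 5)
Your overall skeleton matches the paper's: reduce to Chebyshev, get $\mathbb{E}[\langle\Psi|\Psi\rangle]=1$ from the first-moment (1-design) average, write the variance as a two-layer Ising partition function via Eq.~(\ref{eq_twodesign}), identify the all-down configuration with the disconnected value $1$, view excitations as polyomino-shaped clusters with geometrically small weights, and extract the $\kappa^{\min(L_1,L_2)}$ factor from the fact that on the torus any nontrivial excitation must wind around the shorter cycle (in the paper: every ESS contains a directed cycle, hence has area at least $L$). All of that is sound and is essentially what the paper does.

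The genuine gap is in your combinatorial step. You propose to bound the number of clusters of area $m$ by Klarner's growth constant ($\approx 4.06^m$) and beat it with the per-cell weight. But the per-cell (area) weight here is $q_a = f(\uparrow,\uparrow,\uparrow) \lesssim 1/d$, which equals $\approx 1/2$ at the boundary case $d=2$, and $4.06\times\tfrac12>1$: the series $\sum_m 4.06^m q_a^m$ diverges. Nor can the perimeter factor rescue a general-polyomino count, since compact animals of area $m$ have perimeter only $O(\sqrt{m})$, so perimeter suppression is subexponential in $m$ exactly where the counting is worst. The paper escapes this in two ways that your proposal omits. First, nonzero configurations are not arbitrary clusters: because $f(\uparrow,\downarrow,\downarrow)=0$ (Eq.~(\ref{single_integration_zero}), Property 1 of Properties~\ref{Properties_configurations}), every up-spin must have an up-neighbor to its right or below, so the excitations are \emph{directed} polyominoes, a much smaller class. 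Second, even the directed growth constant ($3$) exceeds $1/q_a$, so the paper also needs the joint weighting by area \emph{and} upper perimeter of Lemma~\ref{lem_area_and_perimeter_bound}, $\mathcal{A}[\vec\sigma]\leq q_a^m q_u^n$ with $q_u\lesssim 1/D^2$, fed into the exact Bousquet--M\'elou generating function $G(q,p)$ of Eq.~(\ref{equ_D}), which (barely) converges at the relevant point $(q,p)\approx(0.54,\,0.25)$; this is what produces $\kappa=0.97$. Finally, the identification of toric excitations with plane directed polyominoes is not automatic and is handled by the bridge transformation of Appendix~\ref{proof_theorem1}, which costs only a $\mathrm{poly}(L)$ factor; your phrase about the geometric factor ``reorganizing'' into $\kappa$ hides precisely this step. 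Without the directedness observation and the joint area/upper-perimeter generating function, the counting as you state it fails at the stated parameters.
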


Assuming that each local unitary forms a $1$-design, it is straightforward to verify that the expectation value of the norm for a random tensor-network state, denoted as $\overline{\langle \Psi|\Psi \rangle}$, is equal to one, i.e., $\overline{\langle \Psi|\Psi \rangle } = \int dU_{H}\langle \Psi(\Theta)|\Psi(\Theta) \rangle = 1$. Therefore, in order to apply the Chebyshev inequality, one only needs to compute the variance of the norm, which is given by $\text{Var}(\langle \Psi | \Psi \rangle ) = \overline{\langle \Psi | \Psi \rangle^2} - \overline{\langle \Psi | \Psi \rangle}^2 =\overline{\langle \Psi | \Psi \rangle^2} -1$. Under the assumption that each local unitary forms a 2-design, we use the Haar-random integration to calculate the variance of the norm. Our proof consists of two parts. First, we map the computation of the Haar-random integration $\overline{\langle \Psi|\Psi \rangle^2}$ to a statistical model, specifically the two-dimensional classical Ising model with external fields. Second, we develop a combinatorial method based on enumerating the toric polyominoes to calculate the partition function of the Ising model. {It is worth noting that solving a general Ising model in high dimension is a long-standing challenge, open for decades ~\cite{mccoy2014two}}. Here, we find that, by introducing an imaginary part to the coupling coefficient in the Ising Hamiltonian, one can reduce the complexity of computing the partition function of the high-dimensional Ising model substantially.



\subsubsection{Mapping to statistical models}

Based on Eq.~(\ref{eq_twodesign}), the calculation of the variance can be mapped to a two-layer classical Ising model with uniform external fields [see Fig.~\ref{illustration_Methods} (a)]. The corresponding partition function is given by:
\begin{equation}
    Z = \frac{1}{C}\sum_{\vec{\sigma}} e^{- H(\vec{\sigma})},
\end{equation}
where the normalization constant $C = [\text{i} D^2 d/(D^4 d^2 -1)]^{V}$
and the spin configuration reads $\vec{\sigma} = \{\dots, \sigma^{(1/2)}_{x,y},\dots\}$, 
with $\sigma_{x,y}^{(1)},\sigma_{x, y}^{(2)}\in \{\uparrow, \downarrow\}$ representing the spin-up and spin-down states at the site $(x,y)$. The indices $(1/2)$ indicate the bottom and upper layers, respectively. 
The Ising Hamiltonian is given by: 
\begin{equation}
\begin{aligned}
\label{norm_ising_hamiltonian}
       H(\vec{\sigma}) =& \sum_{x,y} H_{x,y}\left(\sigma_{x,y}^{(1)}, \sigma_{x,y}^{(2)}, \sigma_{x,y+1}^{(2)},\sigma_{x+1,y}^{(2)} \right)\\
       =&\sum_{x,y} \frac{1}{2} \sigma_{x,y}^{(1)}\left[\left(\sigma_{x,y+1}^{(2)} +\sigma_{x+1,y}^{(2)}\right)J_1-\sigma_{x,y}^{(2)} J_2+h_z\right], 
\end{aligned}
\end{equation}
where the Ising coupling coefficient $J_2 = \text{i}\pi + \text{log}(D^2 d)$, $J_1 = -\text{log}(D)$, and the strength of the external field is $h_z = - \text{log}(d)$. 
In fact, owing to the absence of coupling between spins on each layer, we can compress this model into a two-dimensional hexagonal Ising model. Nevertheless, we adopt this double-layer structure to simplify the  subsequent proof.


At each site $(x,y)$, we define: 
\begin{equation}
\label{single_site_contraction}
\begin{aligned}
f(\sigma_{x,y}^{(2)}, \sigma_{x,y+1}^{(2)},\sigma_{x+1,y}^{(2)})\coloneqq \ipic{external_field}{0.4},
\end{aligned}
\end{equation}
so that 
\begin{equation}
\begin{aligned}
        Z &= \frac{1}{C}\sum_{\vec{\sigma}}e^{-H\left(\vec{\sigma}\right)} = \sum_{\vec{\sigma}}\prod_{x, y}f({\sigma}_{x,y}^{(2)}, {\sigma}_{x, y+1}^{(2)}, {\sigma}_{x+1, y}^{(2)}) \\
    &= \sum_{\vec{\sigma}} e^{-H_{\text{eff}}\left(\vec{\sigma}^{(2)}\right)},
\end{aligned}
\end{equation}
where $H_{\text{eff}}$ is an effective Hamiltonian for the spin configuration $\vec{\sigma}^{(2)}$.

When the spins in the upper layer take different states, the Eq.~(\ref{single_site_contraction}) can be expressed as:
\begin{subequations}
    \begin{align}
     f(\downarrow,\downarrow,\downarrow) &=1,\label{single_integration1}\\
    f(\downarrow,\downarrow,\uparrow) &= f(\downarrow,\uparrow,\downarrow) = \frac{D^3d^2-D}{D^4d^2 - 1}, \label{single_decay1}\\
    f(\downarrow,\uparrow,\uparrow) &= \frac{D^2d^2 -D^2}{D^4d^2 - 1}, \\
    f(\uparrow,\downarrow,\downarrow) &= 0,\label{single_integration_zero}\\
    f(\uparrow,\downarrow,\uparrow) &= f(\uparrow,\uparrow,\downarrow) =\frac{D^3d - Dd}{D^4d^2 - 1},\label{single_decay2}\\
    f(\uparrow,\uparrow,\uparrow) &= \frac{D^4d-d}{D^4d^2-1}.\label{single_decay3}
\end{align}
\end{subequations}
In the following, we denote $\vec{\sigma}$ as the spin configurations in the upper layer, where every spin in the bottom layer is summed over. Each configuration contributes a value  $\mathcal{A}[\vec{\sigma}] = \prod_{x, y}f({\sigma}_{x,y}^{(2)}, {\sigma}_{x, y+1}^{(2)}, {\sigma}_{x+1, y}^{(2)}) $ to the partition function. Following Eqs.~(\ref{single_integration1}-\ref{single_decay3}), configurations with non-zero values have the following properties: 
\begin{observation}[Properties of configurations with non-zero values]~
\label{Properties_configurations}
\begin{enumerate}
    \item If $\mathcal{A}[\vec{\sigma}]\neq 0$ and $\sigma_{x,y}=\uparrow$, one of $\sigma_{x+1, y}$ and $\sigma_{x, y+1}$ should be $\uparrow$ as well. 
    \item Each excitation in one configuration will contribute a factor that is less than one.
    \item Each pair of neighbors with different spins will contribute a factor that is less than one.
    \item The configuration with all $\sigma_{x,y}=\downarrow$ has value $1$, which corresponds to the ground state of the effective Hamiltonian $H_{\text{eff}}$. 
\end{enumerate}
\end{observation}

\begin{figure}[htp]
\centering
\includegraphics[scale=0.35]{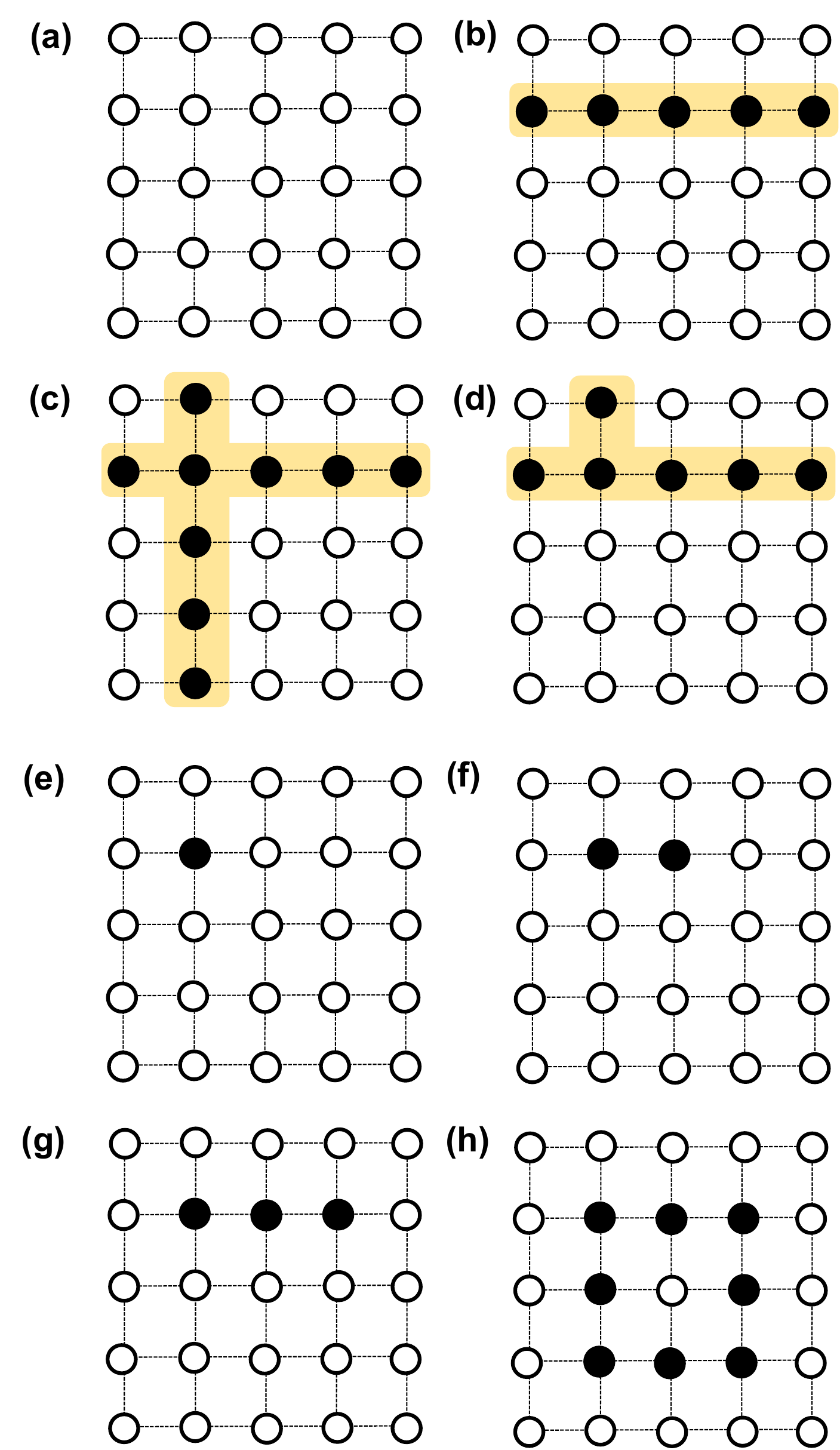}
\caption{{Illustrations of the zero and non-zero configurations in a $L \times L$ lattice.} (a) the ground state spin configuration. Here, we denote the spin-down state as a white circle, and the spin-up state as a black disc. This configuration contributes a value of $1$. (b-d) non-zero configuration with valid excited spin string (ESS), where the length of the ESS  has to be larger than $L$ and forms at least one closed loop. (e-h) an illustration of the zero configuration spin excitation. }
\label{configure_examples}
\end{figure}
The configurations with the form in Fig.~\ref{configure_examples} (b-d) contribute non-zero terms in the partition functions, where the excited spins form an excited-spin-string (ESS). Let $\Sigma$ denote the set of all spin configurations with non-zero $\mathcal{A}$-values except the trivial all-$\downarrow$ configuration. The partition function can be rewritten in term of these configurations:
\begin{equation}
\label{partition_function_excited}
    Z = 1+ \sum_{\vec{\sigma}'\in \Sigma}\mathcal{A}\left[\vec{\sigma}'\right].
\end{equation}
To calculate this partition function, we employ a method that mapping excited spin configurations to polyominoes, also referred to as animals in physical literature~\cite{golomb1996polyominoes, delest1984algebraic,redelmeier1981counting}. A polyomino is a plane geometric shape composed of identical squares joined edge to edge, as illustrated in Fig.~\ref{illustration_polyomino} (a). As shown in Fig.~\ref{configure_examples} (b-d), an ESS naturally forms a polyomino. By utilizing techniques for enumerating polyominoes, we can compute the partition function expressed in Eq.~(\ref{partition_function_excited}).  



\subsubsection{The polyominoes}

The enumeration of polyominoes is relevant to various statistical problems, including percolation problems \cite{stauffer2018introduction} and branched polymers \cite{Lubensky1979statistics}. Counting the total number of polyominoes is a challenging task, and there is no known formula or algorithm that can provide the number of polyominoes for any given size~\cite{bousquet-melou_exactly_2009}. However, some subclasses of polyominoes, such as the directed polyominoes, have exact formulas available~\cite{bousquet1998new}.


\begin{figure}[htp]
    \centering
    \includegraphics[scale=0.275]{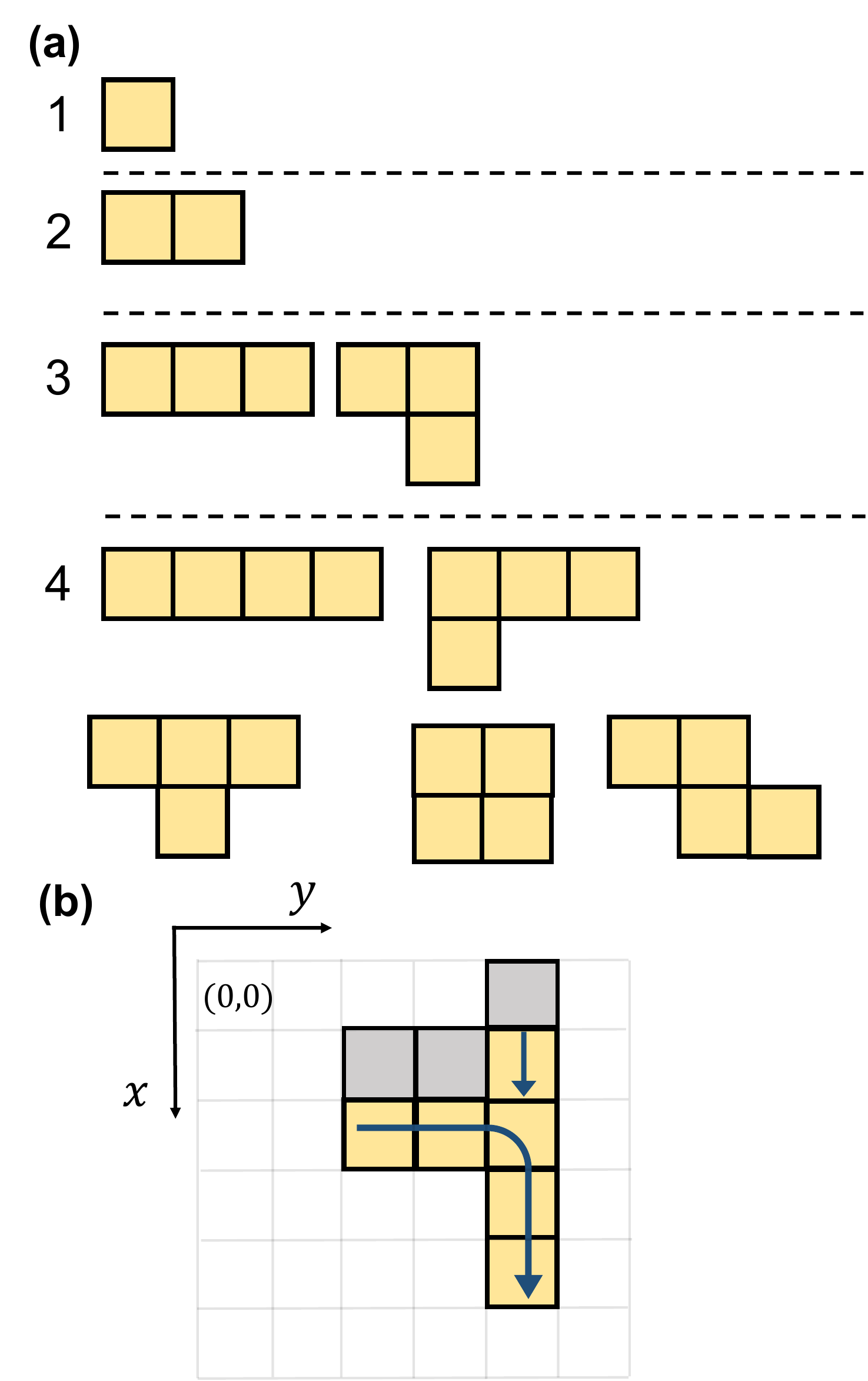}
    \caption{{Illustrations of the general polyomino and the directed polyomino} (a) Illustration of the different types of general polyominoes with the area ranging from $1$ to $4$. Here we consider the different rotations of a polyomino as the same type. (b) Illustration of a directed polyomino rooted at $(0, 0)$ with area $m = 6$, perimeter $p=14$, and upper perimeter $n=3$ (as shown in the grey square).}
    \label{illustration_polyomino}
    \end{figure}

The directed polyominoes lie in an infinite square lattice. In this context, the term ``directed" indicates a polyomino extends to the right and below, and it stops at a root. A formal definition of the directed polyominoes is presented as follows:
\begin{definition}[Directed polyominoes]
    A directed polyomino in the Euclidean plane rooted at $(x_0, y_0)$ is a finite subset $\vec{\tau}\subseteq \ZZ \times \ZZ$ ($\ZZ$ denotes the set of integers)
    such that,
    \begin{itemize}
        \item $(x_0, y_0)\in \vec{\tau}$,
        \item For $(x, y)\in \vec{\tau}$ and $(x,y)\neq (x_0, y_0)$, at least one of $(x+1, y)$ and $(x, y+1)$ are in $\vec{\tau}$ as well.
    \end{itemize}
\end{definition}
In this paper, directed polyomino is also called a plane polyomino. 
A polyomino can be described using its area, perimeter, and upper perimeter. The formal definitions of these terms are listed as follows:
\begin{definition}[Area, perimeter, and upper perimeter]\label{def_area_and_upper_site_perimeter}
    Let $\vec{\tau}$ be a directed polyomino.
    \begin{itemize}
        \item The area $m$ of $\vec{\tau}$ is the size of the polyomino, i.e., $|\vec{\tau} |$.
        \item The perimeter $p$ of $\vec{\tau}$ is the number of edges on the boundary of $\vec{\tau}$. Formally speaking, $p=|\{(x, y)\in \ZZ\times \ZZ|\tau_{x, y}\neq \tau_{x+1, y}\}|+|\{(x, y)\in \ZZ\times \ZZ|\tau_{x, y}\neq \tau_{x, y+1}\}|$.
        Here $\tau_{x, y}=1$ if $(x, y)\in \vec{\tau}$ and $\tau_{x, y}=0$ if $(x, y)\not\in \vec{\tau}$.
        \item The upper perimeter $n$ of $\vec{\tau}$ is the number of horizontal edges on the top boundary of $\vec{\tau}$. Formally speaking, $n=|\{(x, y)\in \ZZ\times \ZZ|\tau_{x, y}=0, \tau_{x+1, y}=1\}|.$
    \end{itemize}
\end{definition}

We can enumerate directed polyominoes exactly via their generating function~\cite{bousquet1998new}.
\begin{lemma}[\cite{bousquet1998new}]
    Let $D_{m,n}$ be the number of directed polyominoes rooted at $(0, 0)$ with area $m$ and upper perimeter $n$. $p$ and $q$ are two variables. We introduce the generative function of $D_{m,n}$ as  
    \begin{equation}\label{equ_D_definition}
        G(q, p)=\sum_{m,n}D_{m,n}q^m p^n.
    \end{equation}
    Then we have:
    \begin{equation}\label{equ_D}
        G(q, p) = \frac{p}{2}\left(\sqrt{\frac{(1+q)(1+q-qp)}{1-q(2+p)+q^2(1-p)}}-1\right).
    \end{equation}
\end{lemma}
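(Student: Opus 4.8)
The plan is to recognize these directed polyominoes as directed site-animals on the square lattice, so that the sought bivariate generating function is a refinement of the classical Dhar--Nadal area generating function, which is recovered at $p=1$ where the radicand in \eqref{equ_D} collapses to $(1+q)/(1-3q)$. With this identification in hand, I would follow the standard route for algebraic area generating functions of directed objects: build the polyomino layer by layer, encode the construction as a functional equation carrying one catalytic variable, and solve by the kernel method. The square-root form of \eqref{equ_D} is the generic signature of a quadratic kernel, which is exactly what makes this approach natural here.

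First I would fix a scanning direction and decompose $\vec{\tau}$ by successive rows (or anti-diagonals). Because $\vec{\tau}$ is directed, the set of cells that may appear in layer $k+1$ is completely determined by the occupied cells of layer $k$: a new cell is admissible precisely when it is supported from the right or from above by an already-placed cell, which is the content of the defining condition. This yields a transfer rule from one layer's profile to the next. I would then introduce a catalytic variable $s$ recording the profile of the current topmost layer (for instance, the number and relative positions of its occupied cells), weighting each cell by $q$ to track the area $m$ and each newly exposed top edge by $p$ to track the upper perimeter $n$. Summing over all admissible extensions produces a linear functional equation of the shape $K(s)\,F(s)=R(s)$, where $F(s)=F(q,p,s)$ is the profile-refined generating function, $K$ is the kernel, and $R$ gathers the boundary contributions involving $F$ evaluated at a fixed specialization of $s$.

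Next I would resolve this equation by the kernel method. Regarded as a quadratic in the catalytic structure, $K(s)=0$ has a unique root $s_{0}(q,p)$ that is a formal power series vanishing at $q=0$; substituting $s=s_{0}$ annihilates the left-hand side, pins down the unknown boundary term, and leaves an algebraic (quadratic) equation for $G(q,p)=\sum_{m,n}D_{m,n}q^m p^n$. Selecting the branch analytic at $q=0$ with nonnegative coefficients produces the radical in \eqref{equ_D}, and a short simplification reproduces the polynomials $(1+q)(1+q-qp)$ and $1-q(2+p)+q^2(1-p)$ under the square root. As consistency checks I would verify that $G(q,1)=\tfrac{1}{2}\left(\sqrt{(1+q)/(1-3q)}-1\right)$ and that the first totals $\sum_{n}D_{m,n}=1,2,5,13,\dots$ match a direct hand count of small directed animals.

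The main obstacle is the first step: writing down the correct functional equation, and in particular encoding the upper-perimeter statistic $n$ together with the area $m$ under the layer transfer while keeping the recursion closed with a single catalytic variable. Assigning the right weight to each newly exposed top edge, and choosing the catalytic bookkeeping so the equation actually closes, is where the genuine work lies; once the equation is correct, the kernel-method resolution and the choice of analytic branch are routine algebra. An equivalent and sometimes cleaner alternative is Viennot's heaps-of-pieces formalism: directed animals are in bijection with pyramids of dimers, and the inversion lemma expresses the pyramid generating function through the alternating generating function of trivial heaps, which here is an explicit rational expression whose inversion yields the same algebraic answer, with $p$ marking the heap statistic dual to the upper perimeter. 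Either way, for the detailed combinatorial bookkeeping I would defer to \cite{bousquet1998new}.
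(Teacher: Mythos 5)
The first thing to note is that the paper supplies no proof of this lemma at all: it is quoted verbatim from \cite{bousquet1998new} and used as a black box, so there is no internal argument to compare your proposal against. Judged on its own terms, your sketch is faithful to how this result is established in the literature. The identification of the paper's directed polyominoes with directed site animals on the square lattice is correct (a non-root cell must be supported from the right or from above, per the paper's definition). Your specialization check is right: at $p=1$ the radicand of \eqref{equ_D} collapses to $(1+q)/(1-3q)$, so $G(q,1)=\frac{1}{2}\bigl(\sqrt{(1+q)/(1-3q)}-1\bigr)$, the classical area generating function of directed animals with coefficients $1,2,5,13,35,\dots$, matching your proposed hand count. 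The bivariate formula also passes low-order checks: expanding \eqref{equ_D} gives $G(q,p)=pq+(p+p^2)q^2+O(q^3)$, which agrees with the single cell (upper perimeter $1$) and the two dominoes (upper perimeters $1$ and $2$ under Definition~\ref{def_area_and_upper_site_perimeter}). Both routes you name, a layer-by-layer functional equation solved by a catalytic-variable/kernel argument, and Viennot's heaps-of-pieces formalism, are genuine derivation strategies for this family of results.

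As a proof, however, your proposal stops exactly where the work begins, and you say so yourself: the functional equation, i.e.\ the layer-transfer relation in which the area weight $q$ and the upper-perimeter weight $p$ are threaded through a single catalytic variable so that the recursion closes, is never written down; it is deferred to the same reference the paper cites. That equation is the entire mathematical content of the lemma; once it is on paper, the kernel resolution and the choice of the branch analytic at $q=0$ are routine algebra, but without it nothing has been proved. A second point needs explicit care: the paper's statistic $n$ counts empty sites $(x,y)$ with $\tau_{x+1,y}=1$, a one-sided site-boundary statistic, and the literature on directed animals contains several inequivalent perimeter notions (site perimeter, bond perimeter, right/left perimeter) with different generating functions, so one must verify that the statistic refined by the formula of \cite{bousquet1998new} is precisely this one; your sketch asserts the match rather than checking it. In short, your proposal is a correct strategy outline, at exact parity with the paper's own treatment (citation without proof), but it is not a self-contained proof of the stated identity.
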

This lemma implies that when $q, p\in (0, 1]$ such that $|q(2+p)+q^2(1-p)|<1$, the power series $\sum_{m,n}D_{m,n}q^mp^n$ converges to a finite value $G(q, p)$.
We will see that the partition function $Z$ has a similar power-series form, so the remaining problem reduces to determining the number of configurations in $\Sigma$ by $D_{m,n}$.

For simplicity, we consider a square lattice with a periodic boundary condition, forming an $L\times L$ toric graph. A vertex on the torus is indexed by $(x, y)\in \ZZ_L\times \ZZ_L$, where $\ZZ_L=\{0, 1, \cdots, L-1\}$. We remark that our technique can be easily generalized to rectangle lattices. 
A vertex $(x, y)$ is connected to $((x\pm 1) \% L, y)$ and $(x, (y\pm 1)\% L)$, where $\% L$ means the residue modulo $L$. All $\uparrow$ in an excited configuration $\vec{\sigma}\in \Sigma$ naturally forms a polyomino on the torus. We will call the excited configuration a toric polyomino, so as to be consistent with the definition of plane polyominoes. A polyomino may have many connected components. Each component is an excited-spin-string (ESS):
\begin{definition}[ESS]
    An ESS in the toric graph $T_L$ rooted at $(x_0, y_0)$ is a connected subset of $\ZZ_L\times \ZZ_L$, such that
    \begin{itemize}
        \item $(x_0, y_0)\in \vec{\sigma}$,
        \item For $(x, y)\in \vec{\sigma}$ and $(x,y)\neq (x_0, y_0)$, at least one of $((x+1)\% L, y), (x, (y+1)\% L)$ are in $\vec{\sigma}$ as well.
    \end{itemize}
\end{definition}
By the definitions, a configuration $\vec{\sigma}\in \Sigma$ is a union of several ESSs. Similar to plane polyominoes, we can define area $m$, upper perimeter $n$, and perimeter $p$ for every ESS and toric polyomino. From the Properties~\ref{Properties_configurations} (2-3), we can upper bound $\mathcal{A}[\vec{\sigma}]$ by the area and perimeter of toric polyominoes as shown in the following lemma. 
\begin{lemma}\label{lem_area_and_perimeter_bound}
    Let $\vec{\sigma}\in \Sigma$ be a configuration with area $m$, perimeter $p$, and upper perimeter $n$, then
    \begin{equation}
        \mathcal{A}[\vec{\sigma}]\leq q_a^m q_p^{p}\leq q_a^m q_u^{n}
    \end{equation}
    where the area parameters $q_a= f(\uparrow, \uparrow, \uparrow)=\frac{D^4d-d}{D^4d^2-1}\lesssim \frac{1}{d}$, the perimeter parameter $q_p = f(\downarrow, \downarrow, \uparrow)=\frac{D^3d^2-D}{D^4d^2-1}\lesssim \frac{1}{D}$, and the upper perimeter parameter $q_u=q_p^2\lesssim \frac{1}{D^2}$.
\end{lemma}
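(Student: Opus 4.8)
The plan is to prove both inequalities by \emph{localizing} the product $\mathcal{A}[\vec\sigma]=\prod_{x,y}f(\sigma^{(2)}_{x,y},\sigma^{(2)}_{x,y+1},\sigma^{(2)}_{x+1,y})$ to individual sites. The key structural observation is that the factor attached to site $(x,y)$ inspects only the right edge $\{(x,y),(x+1,y)\}$ and the top edge $\{(x,y),(x,y+1)\}$; on the torus each edge is the right (resp.\ top) edge of exactly one vertex, so every edge is seen by precisely one factor. Setting $a_{x,y}=\mathbf{1}[\sigma^{(2)}_{x,y}=\uparrow]$ and $b_{x,y}=\mathbf{1}[\sigma^{(2)}_{x,y}\neq\sigma^{(2)}_{x+1,y}]+\mathbf{1}[\sigma^{(2)}_{x,y}\neq\sigma^{(2)}_{x,y+1}]$, summing over the lattice reproduces exactly the area $m=\sum_{x,y}a_{x,y}$ and perimeter $p=\sum_{x,y}b_{x,y}$ of Definition~\ref{def_area_and_upper_site_perimeter}. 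It therefore suffices to prove the single-site bound $f_{x,y}\le q_a^{\,a_{x,y}}q_p^{\,b_{x,y}}$ at every vertex and take the product.

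Using the explicit values in Eqs.~(\ref{single_integration1})--(\ref{single_decay3}) together with the symmetry $f(\cdot,A,B)=f(\cdot,B,A)$, there are six cases indexed by (self spin, number of boundary edges). The cases $(a,b)\in\{(0,0),(1,0),(0,1)\}$ hold with equality, namely $f(\downarrow,\downarrow,\downarrow)=1$, $f(\uparrow,\uparrow,\uparrow)=q_a$, $f(\downarrow,\downarrow,\uparrow)=q_p$; the case $(a,b)=(1,2)$ is trivial since $f(\uparrow,\downarrow,\downarrow)=0$ (a configuration forbidden in any case by Properties~\ref{Properties_configurations}(1)). The only genuine inequalities are the two corner cases $f(\downarrow,\uparrow,\uparrow)\le q_p^{2}$ and $f(\uparrow,\downarrow,\uparrow)\le q_a q_p$. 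Clearing denominators, the former reduces to $(d^2-1)(D^4d^2-1)\le(D^2d^2-1)^2$, whose difference is the square $d^2(D^2-1)^2\ge0$; the latter reduces to $(D^2-1)(D^4d^2-1)\le(D^4-1)(D^2d^2-1)$, with difference $(d^2-1)D^2(D^2-1)\ge0$. Both differences are manifestly nonnegative, establishing $\mathcal{A}[\vec\sigma]\le q_a^m q_p^p$.

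For the second inequality, since $q_a^m>0$ the claim $q_a^m q_p^p\le q_a^m q_u^n$ with $q_u=q_p^2$ is equivalent to $q_p^{\,p}\le q_p^{\,2n}$, which holds whenever $p\ge2n$ because $0<q_p\le1$ (indeed $1-q_p=\tfrac{(D-1)(D^3d^2+1)}{D^4d^2-1}\ge0$). To get $p\ge2n$, I split the perimeter into vertical edges (those separating adjacent columns, i.e.\ pairs $(x,y),(x+1,y)$) and horizontal edges. Reading each row $y$ as a cyclic $0/1$ word $\tau_{\cdot,y}$, the number of $0\to1$ transitions equals the number of $1\to0$ transitions; summed over rows the former is exactly the upper perimeter $n$, so the vertical-edge count alone is $2n$. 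As $p$ also includes the nonnegative horizontal-edge count, $p\ge2n$ follows, completing the argument.

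The bulk of the work is the routine six-case verification, which poses no real obstacle. The one step I would scrutinize most carefully is the combinatorial identity ``vertical perimeter $=2n$'': it relies on each row being a genuine \emph{cycle} on the torus (so the $0\to1$ and $1\to0$ counts coincide, unlike a finite plane strip) and on correctly matching the paper's index convention for the upper perimeter $n=|\{(x,y):\tau_{x,y}=0,\tau_{x+1,y}=1\}|$ to the $x$-direction transitions. I would likewise double-check the per-site edge bookkeeping under periodic boundary conditions, since any off-by-one in counting edges would corrupt the exponent of $q_p$.
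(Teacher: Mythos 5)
Your proof takes essentially the same route as the paper's: a per-site bound $f(\sigma_1,\sigma_2,\sigma_3)\le q_a^{\mathcal{I}(\sigma_1=\uparrow)}q_p^{\mathcal{I}(\sigma_1\neq\sigma_2)+\mathcal{I}(\sigma_1\neq\sigma_3)}$, multiplied over all sites so that the indicator sums reproduce the area $m$ and perimeter $p$, followed by $p\ge 2n$ (with $q_u=q_p^2$, $0<q_p\le1$) to pass to the upper perimeter. The only difference is that you explicitly verify the two nontrivial single-site inequalities $f(\downarrow,\uparrow,\uparrow)\le q_p^2$ and $f(\uparrow,\downarrow,\uparrow)\le q_aq_p$ and give the cyclic transition-counting argument for $p\ge2n$, steps the paper merely asserts; your verifications are correct.
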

\begin{proof}
We note that $f(\downarrow, \uparrow, \uparrow)=\frac{D^2d^2-D^2}{D^4d^2-1}\leq q_p^2, f(\uparrow, \downarrow, \uparrow)=f(\uparrow, \uparrow, \downarrow)=\frac{D^3d-Dd}{D^4d^2-1}\leq q_aq_p$.
Hence, for all $\sigma_{1}, \sigma_2, \sigma_3\in \{\uparrow, \downarrow\}$, we have
\begin{equation}
    f(\sigma_1, \sigma_2, \sigma_3)\leq q_a^{\mathcal{I}(\sigma_1 = \uparrow)}p_q^{\mathcal{I}(\sigma_1\neq \sigma_2)+\mathcal{I}(\sigma_1 \neq \sigma_3)},
\end{equation}
where $\mathcal{I}(P)$ is the indicator function of an event $P$, i.e., $\mathcal{I}(P)=1$ if the event $P$ occurs, and 0 otherwise. According to the definition, $m=\sum_{x, y}\mathcal{I}(\sigma_{x,y}=\uparrow), p=\sum_{x, y}(\mathcal{I}(\sigma_{x, y}\neq \sigma_{x+1, y})+\mathcal{I}(\sigma_{x, y}\neq \sigma_{x, y+1}))$.
Therefore, $\mathcal{A}[\vec{\sigma}]=\prod_{x,y}f(\sigma_{x,y}, \sigma_{x+1, y},\sigma_{x, y+1})\leq q_a^mq_p^p$.
Furthermore, it is easy to see that the upper perimeter always equals to the lower perimeter, so $p\ge 2n$ and $q_a^mq_p^p\leq q_a^mq_u^n$.
\end{proof}

This lemma implies that $\mathcal{A}[\vec{\sigma}]$ exponentially decays with the area and the upper perimeter. The partition function can be upper bounded by:
\begin{equation}
\label{partition_upperbound}
    Z = 1+ \sum_{\vec{\sigma}} \mathcal{A}[\vec{\sigma}] \leq 1 + \sum_{m,n} \widetilde{D}_{m,n} q_a^{m} p_u^n,
\end{equation}
where we set the number of toric polyominoes with area $m$ and upper perimeter $n$ as $\widetilde{D}_{m,n}$. A toric polyomino is very similar to a plane polyomino except it lies on a torus with periodic boundary. We can simply embed a plane polyomino $\vec{\tau}$ to the torus by mapping each $(x, y)\in \vec{\tau}$ to $(x\% L, y\% L)\in \mathbb{Z}_L\times \mathbb{Z}_L$. The resulting shape is a toric polyomino. Therefore, the number of plane polyominoes and  that of toric polyominoes with the same area and upper perimeters are roughly the same. In Appendix~\ref{proof_theorem1}, We propose a bridge transformation to make this intuition rigorous. In short, we can write the summation in Eq.~(\ref{partition_upperbound}) as:
\begin{align}
\label{partition_upperbound_1}
    \sum_{m,n} \widetilde{D}_{m,n} q_a^{m} p_u^n &= \sum_{m\ge L,n} \widetilde{D}_{m,n} q_a^{m} p_u^n\nonumber\\
    &\leq \text{poly}(L)\sum_{m\ge L, n} D_{m,n} (q_a)^{m} (p_u)^n\nonumber\\
    &\leq \text{poly}(L)\eta^L\sum_{m\ge L, n} D_{m,n} (q_a/\eta)^{m} (p_u)^n\nonumber\\
    &\leq \text{poly}(L)\eta^L G(q_a/\eta, p_u),
\end{align}
where the area of any toric polyomino is at least $L$, and $\text{poly}(L)$ is a coefficient brought by the bridge transformation.
$\eta\in (0, 1)$ is a constant such that $G(q_a/\eta, p_u)$ converges to a finite value.
For $d, D\ge 2$, we can choose $\eta=25/26<0.97$, therefore $Z\leq 1 + O(0.97^L)$. Then, we can implement the Chebyshev's inequality~\cite{Tchebichef1867Des} to bound the fluctuation of the norm of random tensor network states. In the appendix we propose a reflection trick that can improve the bound from $0.97$ to $0.7$.
We remark that with a large enough system size, the fluctuation of the norm can be suppressed exponentially.  Theorem~\ref{theorem:normalized} indicates that we can compute other statistical properties of the random tensor network states without considering the normalization factor. 



\subsection{Entanglement entropy}
\subsubsection{Area law entanglement entropy for large subsystems}
Now we apply the above theorem to investigate the entanglement properties of the random tensor network states. The purity $\text{Tr}(\rho_A^2)$ for a subregion $A$ (with the size $|A| = l \times l$) determines the second R\'{e}nyi entropy. We can use a doubled copy of random tensor network states to obtain the second R\'{e}nyi entropy (This corresponds to the swap test approach~\cite{Ekert2002lineardensitymatrix}):
\begin{equation}
    \label{renyi_entropy}
    S_2(A) = -\text{log}\left[ \frac{\text{Tr}(X_A |\Psi\rangle \langle \Psi | \otimes |\Psi\rangle \langle \Psi |)}{\text{Tr}(|\Psi\rangle \langle \Psi | \otimes |\Psi\rangle \langle \Psi |)}\right],
\end{equation}
where $X_A$ is a swap operator that only acts on the subregion $A$ of the two copies. We note that the denominator term is equivalent to the norm of a random tensor network. Since this denominator is concentrated around $1$ as the system size is large enough (as stated in Theorem~\ref{theorem:normalized}), we will ignore this term in the later discussions. We first focus on the entanglement properties where the subsystem size $|A|$ is large. 
\begin{theorem}
    \label{theorem:entropy} 
    Given a subset qubits system $A$ with a volume $|A|=l\times l$ in the $V = L \times L$ tensor network with $L \rightarrow \infty$, we set the bond dimension $D\ge 3$, and physical dimension $d \ge 2$. Then, the averaged purity of the reduced density matrix $\rho_A$ can be bounded by the following relations:
    \begin{equation}
    \lambda_1 q_a^{|A|}+\lambda_2 q_p^{|\partial A|}\leq \overline{\text{Tr}(\rho_A^2)} \leq c(1.91q_a)^{|A|}+c(2.9q_p)^{|\partial A|},
    \end{equation}
where $c$ is a constant independent of $d, D, l$,  $\lambda_1=(1+q_a^{-1}q_p^4/2)^{(l-1)^2/2}$, and $\lambda_2=0.7(1+q_aq_p^2/2)^{4l-1}$. 

    \end{theorem}

\begin{figure}[htp]
    \centering
    \includegraphics[scale=0.35]{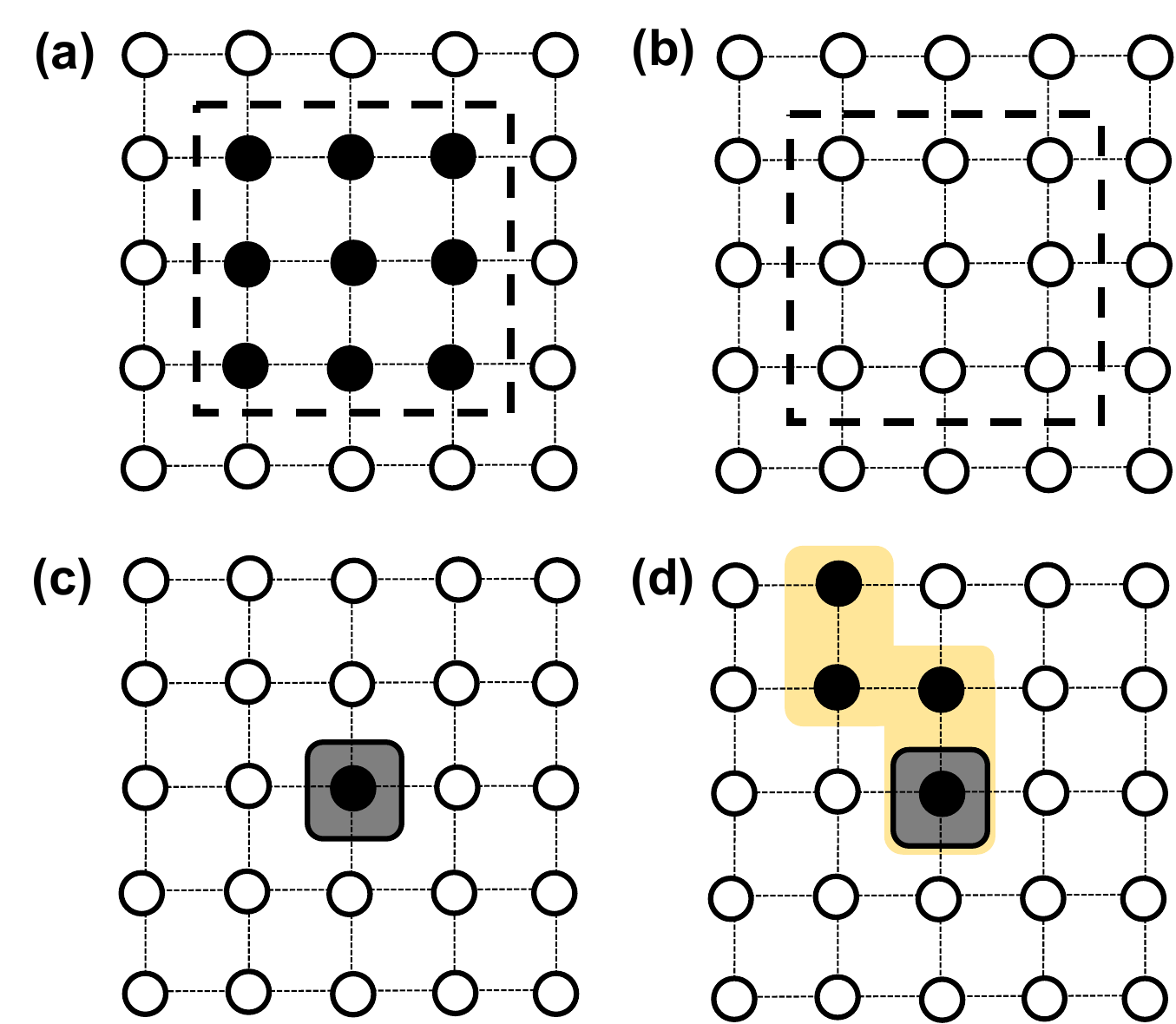}
    \caption{{Corresponding spin configurations of the averaged purity of reduced density matrix $\overline{\text{Tr}(\rho_A^2)}$ and the averaged second order expectation value $\overline{\langle \Psi|\hat{O}|\Psi \rangle^2}$.}  (a) the ground state configuration of the averaged purity as bond dimension $D < D_c$. The thick dotted line separate the two subsystems of inside and outside of region $A$. (b) the ground state configuration as bond dimension $D > D_c$. (c) the ground state configuration of $\overline{\langle \Psi|\hat{O}|\Psi \rangle^2}$, where the grey square indicates the site that hosts for the local observable $\hat{O}_k$. (d) the excited states configurations of $\overline{\langle \Psi|\hat{O}|\Psi \rangle^2}$, where the ESS has at least one endpoint on the site that hosts the local observable. }
    \label{entropy_typicality}
    \end{figure}

The averaged purity of the reduced density matrix reads: $Z_p = \overline{\text{Tr}(\rho_A^2)}$. According to Eq.~(\ref{renyi_entropy}), the external field $h_z^{(A)}$ inside the subregion $A$ is opposite to the outside region due to the swap operator $X_A$, as shown in Fig.\ref{illustration_Methods} (b).
 Hence, we can map the averaged purity to a partition function:
 \begin{equation}
    Z_p = \sum_{\vec{\sigma}}\prod_{(x, y)\not\in A} f_{x,y}^{(1)}\prod_{(x, y)\in A}f_{x,y}^{(2)}.\label{equ_Z_p}
\end{equation} 
Here, we define:
\begin{equation}
    \begin{aligned}
            f_{x,y}^{(1)} &= f(\sigma_{x,y}, \sigma_{x, y+1}, \sigma_{x+1,y});\\
            f_{x,y}^{(2)} &= f(-\sigma_{x, y}, -\sigma_{x, y+1}, -\sigma_{x+1, y}),
\end{aligned}
\end{equation}
where $-\sigma_{x, y}$ is the opposite spin direction of $\sigma_{x, y}$. 
Let $\mathcal{B}[\vec{\sigma}]$ denote the value of $\vec{\sigma}$ so that $Z_p=\sum_{\vec{\sigma}}\mathcal{B}[\vec{\sigma}]$. 

The difference between $\mathcal{A}[\vec{\sigma}]$ and $\mathcal{B}[\vec{\sigma}]$ is due to the effect of region $A$.
Intuitively, if the area for configurations is large, the effect of region $A$ is relatively insignificant, so we use the same technique in Theorem 1 to show that their contributions to $Z_p$ is negligible. Therefore, the leading term of $Z_p$ comes from small-area configurations around $A$.
There are two typical configurations best demonstrating the effect of $A$.

\begin{definition}[Typical configurations]~
    \begin{enumerate}
    \item $\vec{\sigma}_0$: all spins in $A$ are $\SU$'s and all spins outside $A$ are $\SD$'s. $\mathcal{B}[\vec{\sigma}_0]=(\frac{D^3d^2-D}{D^4d^2-1})^{4l-2}(\frac{D^2d^2-D^2}{D^4d^2-1})\approx (\frac{1}{D})^{4l}$ is dominated by the boundary of $A$.
    \item $\vec{\sigma}_1$: all spins are $\SD$'s.
    $\mathcal{B}[\vec{\sigma}_1]=(\frac{D^4d-d}{D^4d^2-1})^{l^2}\approx (\frac{1}{d})^{l^2}$ is dominated by the bulk of $A$.
\end{enumerate}
\end{definition}
The two typical configurations reflect different effects.
A general configuration combines the boundary effect and the bulk effect.
Suppose $\vec{\sigma}$ has $t$ $\uparrow$ and $l^2-t$ $\downarrow$ inside $A$. One can prove that the perimeter is at least $4\sqrt{t}$ and $\mathcal{B}[\vec{\sigma}]\leq (1/d)^{l^2-t}(1/D)^{4\sqrt{t}}\leq \max\{(1/d)^{l^2}, (1/D)^{4l}\}$.
Therefore, the two typical configurations have the largest $\mathcal{B}$-value.
Moreover, any configuration with large $\mathcal{B}$ should resemble the two typical configurations, thus the number of such configurations are not too large. 
Consequently, $Z_p$ is upper bounded by $\mathcal{B}[\vec{\sigma}_0]+\mathcal{B}[\vec{\sigma}_1]$ with a modest factor, as shown in Theorem~\ref{theorem:entropy}.
On the other hand, by modifying the two typical configurations slightly, we explicitly construct configurations with large $\mathcal{B}$-value, resulting in the lower bound of Theorem~\ref{theorem:entropy}.

The key technique here is to bound the number of configurations by the generating function of polyominoes.
According to~\eqref{equ_Z_p}, all $\uparrow$ outside $A$ form several ESSs rooted at the left and top boundary of $A$, and all $\downarrow$ inside $A$ form several ESSs rooted at the right and bottom boundary of $A$.
In Appendix \ref{proof_theorem2}, we apply the bridge transformation to rigorously establish a relation between the number of configurations with the number of plane polyominoes.

When the bond dimension $D <D_c = d^{\frac{|A|}{|\partial A|}}$, the corresponding ground state configuration  is presented in Fig.~\ref{entropy_typicality} (a). 
We can bound the second R\'{e}nyi entanglement entropy:
\begin{equation}
    S_2(\rho_A) = |\partial A| \text{log} (D),
\end{equation}
which shows an area law \cite{Eisert2010Colloquium} for the tensor network states.  Similarly,  as shown in Fig.~\ref{entropy_typicality} (b), when $D>D_c$, the bulk part dominates, and the entanglement entropy will concentrate to the maximum.

\subsubsection{Maximum entropy for small subsystems}
For random MPS, it has been shown that the reduced density matrix of a small subsystem is concentrated to a maximum entropy with high probability \cite{collins2012matrix,Haferkamp2021Emergent,gonzalez2018spectral}.  For random high-dimensional tensor network, we show how the reduced density matrix $\rho_A$ approaches the maximum entanglement states with the increasing of the bond dimension $D$. 

\begin{theorem}
    \label{theorem:maximum_entropy} 
    Given a subset qubits system $A$ with volume $|A|=l\times l$ in the $V = L \times L$ tensor network with $L \rightarrow \infty$. Then for any physical dimension $d\ge 2$ and bond dimension $D\ge 3$, the averaged purity of the reduced density matrix $\rho_A$ can be upper bounded by the following inequality:
    \begin{equation}
        \text{Pr}\left[ \left|\text{Tr}(\rho_A^2)-d^{-|A|} \right| \geq \epsilon \right] \leq O\left(\frac{D^{-4}}{\epsilon}\right),
    \end{equation}
    where $\epsilon$ is a small constant.
    \end{theorem}
The proof of Theorem~\ref{theorem:maximum_entropy} follows a similar approach used to prove Theorem~\ref{theorem:entropy}. The only configuration with zero perimeter is $\vec{\sigma}_1$. The perimeters of other configurations are at least $4$. So $Z_p$ 
can be upper bounded by $d^{-|A|}+O(D^{-4})$. Theorem~\ref{theorem:maximum_entropy} then follows from the Markov inequality. For a disordered random MPS in one-dimensional cases, the concentration of purity on the maximal entangled states is bounded by $O(D^{-2})$ \cite{Haferkamp2021Emergent}, whereas for translation invariant MPS, it is bounded by $O(D^{-1/5})$ \cite{collins2012matrix,gonzalez2018spectral}. In contrast, the bound for 2D cases is $O(D^{-4})$, which is tighter than the one-dimensional case. Our technique applies straightforwardly for higher spatial dimensions as well. In 3D space, the surface area of any not-all-$\downarrow$ configuration is at least 6. So we expect that the concentration is bounded by $O(D^{-6})$. This result proves that the reduced density matrix is highly concentrated around the maximum entangled state, indicating that the subsystem is highly entangled with the rest of the system. We provide a detailed proof of this Theorem in Appendix~\ref{theorem3_proof}.

\subsection{Local typicality}
\label{local_typicality}
 Typicality refers to the phenomenon that a quantum many-body system's majority of pure microstates concentrate on a specific region of the Hilbert space. As a result, the measurement outcomes of these states are very similar to each other. In this work, we consider an observable $\hat{O}$, which is locally acted on the site $(x_0,y_0)$ of a random tensor network state. The expectation values for this local observable is $\langle \Psi | \hat{O}|\Psi \rangle$. For convenience, we assume that $\text{Tr}(\hat{O}) = 0$, so that $\overline{\langle  \Psi |\hat{O} |\Psi \rangle} = 0$. This does not diminish the generality as a generic observable can be written as the sum of a traceless observable and a constant factor multiple of the identity operator. We now prove that the expectation value of the local observable $\hat{O}$ is concentrated to $0$ with the bond dimension $D$. The variance of $\langle \Psi | \hat{O}|\Psi \rangle$ can be written as:
$ \text{Var}(\langle \Psi | \hat{O}|\Psi \rangle) = \overline{\langle \Psi | \hat{O}|\Psi \rangle^2}$. 
 We provide a bound on the variance of the expectation value in the following theorem:
 \begin{theorem}
    \label{theorem:typicality}
Given a $L \times L$ random tensor network state $|\Psi\rangle$ and an arbitrary local observable $\hat{O}$ acting on a single site, the expectation value of such a local observable obeys the following inequality:
\begin{equation}
    \text{Pr}\left[
        |\langle \Psi | \hat{O} | \Psi \rangle | \geq \epsilon
    \right] \leq O\left(\frac{\text{Tr}(\hat{O}^2) D^{-4}}{\epsilon^2}\right),
\end{equation}
where $\epsilon$ is an arbitrary constant real number. 
\end{theorem}
 
 As discussed above, we can map the calculation of variance to the calculation of the partition function for a classical Ising model illustrated in Fig. \ref{illustration_Methods} (d). The partition function reads:
 \begin{equation}
 \label{patition_function_typicality}
     Z_t = \sum_{\vec{\sigma}} t^{(2)}_{x_0,y_0} \prod_{\substack{ (x,y) \neq (x_0, y_0)}}  t^{(1)}_{x,y} ,
 \end{equation}
where $t^{(1)}_{x,y} $ is defined as:
 \begin{equation}
    \begin{aligned}
           t^{(1)}_{x,y} &= t^{(1)}_{x,y}(\sigma_{x,y}, \sigma_{x, y+1}, \sigma_{x+1,y}) \\
           &\coloneqq f(\sigma_{x,y}, \sigma_{x, y+1}, \sigma_{x+1,y}).\\
\end{aligned}
\end{equation}
Note that the definition of $f(\sigma_{x,y}, \sigma_{x, y+1}, \sigma_{x+1,y})$ is given in Eq.~(\ref{single_site_contraction}). The sites $(x,y)\neq(x_0,y_0)$ form a spin model with the same Hamiltonian as Eq.~(\ref{norm_ising_hamiltonian}). For $t^{(2)}_{x,y}$, we have:
\begin{equation}
\begin{aligned}
        t^{(2)}_{x,y} &= t^{(2)}_{x,y}(\sigma_{x,y}, \sigma_{x, y+1}, \sigma_{x+1,y}) \\&\coloneqq\ipic{single_site_observable}{0.4},
\end{aligned}
\end{equation}
where we denote $|\hat{O}_k\rangle^{\otimes2} = \hat{O}_k\otimes \hat{O}_k$ in a isomorphism form, and each $t^{(2)}_{x,y}(\sigma_{x,y}, \sigma_{x, y+1}, \sigma_{x+1,y})$ for different spin configurations $\sigma_{x,y} = \uparrow/\downarrow$ takes the value:
\begin{subequations}
    \begin{align}
        t^{(2)}_{x,y}(\downarrow,\downarrow,\downarrow) &= -\frac{\text{Tr}(\hat{O}^2)}{\left[(D^2d)^2-1\right]d},\label{ob1}\\
    t^{(2)}_{x,y}(\downarrow,\downarrow,\uparrow) &= t^{(2)}_{x,y}(\downarrow,\uparrow,\downarrow) = -\frac{D\text{Tr}(\hat{O}^2)}{\left[(D^2d)^2-1\right]d}, \label{ob2}\\
    t^{(2)}_{x,y}(\downarrow,\uparrow,\uparrow) &= -\frac{D^2\text{Tr}(\hat{O}^2)}{\left[(D^2d)^2-1\right]d},\label{ob3} \\
    t^{(2)}_{x,y}(\uparrow,\downarrow,\downarrow) &= \frac{D^2\text{Tr}(\hat{O}^2)}{(D^2d)^2-1},\label{ob4}\\
    t^{(2)}_{x,y}(\uparrow,\downarrow,\uparrow) &=t^{(2)}_{x,y}(\uparrow,\uparrow,\downarrow) = \frac{D^3\text{Tr}(\hat{O}^2)}{(D^2d)^2-1},\label{ob5}\\
    t^{(2)}_{x,y}(\uparrow,\uparrow,\uparrow) &= \frac{D^4\text{Tr}(\hat{O}^2)}{(D^2d)^2-1}.\label{ob6}
    \end{align}
    \end{subequations}

The ground state configuration is illustrated in Figure \ref{entropy_typicality} (c), where all spin states $\sigma_{x,y}$, except for $(x_0, y_0)$, are in the $\downarrow$ state. We can identify three types of excited state spin configurations: (1) The first type comprises a configuration that includes an ESS with a root at site $(x_0,y_0)$, as shown in Figure \ref{entropy_typicality} (d). In this case, the ESS's area should be $m\geq 0$, and we can utilize a similar bridge transformation described in Section~\ref{Toric_2_plane} to map this graph to a plane polyomino; (2) The second type of configuration involves a spin at either $\left((x_0+1)\%L, y_0\right)$ or $\left(x_0, (y_0+1)\%L\right)$ being in the $\uparrow$ state. To obtain a non-zero value, the spin configuration should include an ESS whose area is $m\geq L$, as stated in Properties~\ref{Properties_configurations}. The contribution of the site's local observable is determined by Eqs.(\ref{ob2},\ref{ob3},\ref{ob5},\ref{ob6}). The proof for this case can be directly derived from Theorem~\ref{theorem:normalized}, where the upper bound is given by an exponentially small factor $\kappa^{L}$ with respect to the system size $L$; (3) The third type is when both endpoints of the ESS are not at $(x_0, y_0)$. Similar to the second type, the contribution for this part is again exponentially small with respect to the system size. In subsequent calculations, we assume a sufficiently large system and thus neglect the effect from the second and third types ESSs.


We can write the parition function in Eq.~(\ref{patition_function_typicality}) in the following form:
\begin{equation}
    \begin{aligned}
        Z_t =& \sum_{\vec{\sigma}}  t^{(2)}_{x_0,y_0} \prod_{\substack{ (x,y) \neq (x_0, y_0)}} t^{(1)}_{x,y} ,\\
           = & \sum_{\vec{\sigma}}  t^{(2)}_{x_0,y_0}(\uparrow, \downarrow,\downarrow) \prod_{\substack{ (x,y) \neq (x_0, y_0)}}  t^{(1)}_{x,y}\\
          & + \sum_{\vec{\sigma}}  t^{(2)}_{x_0,y_0}(\downarrow, \downarrow,\downarrow)\prod_{\substack{ (x,y) \neq (x_0, y_0)}}  t^{(1)}_{x,y}. \\
    \end{aligned}
\end{equation}
 We find that the term $t^{(2)}_{x,y}(\downarrow, \downarrow,\downarrow)$ has a negative contribution, so that we can safely discard this term when calculating the upper bound. The partition function is upper bounded by:
\begin{equation}
\begin{aligned}
    Z_t \leq & t_{x_0-1,y_0}^{(1)}(\downarrow,\uparrow, \downarrow)t_{x_0,y_0-1}^{(1)}(\downarrow, \downarrow,\uparrow)t^{(2)}_{x_0,y_0}(\uparrow, \downarrow,\downarrow)\\
    &+G(\frac{1}{d}, \frac{1}{D^2})t^{(2)}_{x_0,y_0}(\uparrow, \downarrow,\downarrow), \\
\end{aligned}
\end{equation}
where the first term corresponds to the contribution from the ground state configuration, and the second term corresponds to the configurations contain ESSs with an endpoint at $(x_0, y_0)$. We can bound the effects of ESSs by the generating function in Eq.~(\ref{equ_D}). By taking into account the value of $t^{(2)}_{x,y}(\uparrow, \downarrow,\downarrow)$ and $t^{(1)}_{x,y}(\sigma_{x,y}, \sigma_{x, y+1}, \sigma_{x+1,y})$, the upper bound for the variance of $\langle \Psi | \hat{O} | \Psi \rangle$ is as follows:
\begin{equation}
   \text{Var}(\langle \Psi | \hat{O}|\Psi \rangle) \leq  O\left[\frac{1}{D^4}\text{Tr}(\hat{O}^2)\right].
\end{equation}
Using the Chebyshev's inequality, we arrive at Theorem \ref{theorem:typicality}. We can provide a lower bound by calculating the contribution of the ground state configuration, which also scales as $O\left[\text{Tr}(\hat{O}^2)/D^4\right]$. Therefore, this upper bound is tight.

From Theorem \ref{theorem:typicality}, when the bond dimension $D$ is large enough, the expectation value for arbitrary local observable $\hat{O}$ concentrates on a specific value. In the $1$D case, the bound for the variance of local observable expectation scales $D^{-2}$ with the bond dimension $D$~\cite{Haferkamp2021Emergent}. In contrast, the bound can be tighten to $D^{-4}$ in $2$D case. This means that the expectation value of a local observable will concentrate to a specific value more quickly as the bond dimension increases with respect to one dimension. One may also notice that this phenomenon has a profound connection to the barren plateau  problem in training a variational ansatz. As the measurement results for different variational states are almost the same, varying the variational parameters would not change the output result significantly. We will discuss this part in depth in Sec.~\ref{Sec:local_loss_function}.

\section{Barren plateaus in high-dimensional tensor network models}

\subsection{Global loss function}

First, we consider variational models based on tensor-network states with the global loss function:
\begin{equation}
\label{global_loss}
    \mathcal{L}_g = 1-\|\langle\Psi(\Theta)|\phi\rangle\|^2,
\end{equation}
where the state $|\phi\rangle$ is a normalized $n$-qubit constant quantum state. We remark that $|\phi\rangle$ can be an arbitrary state regardless of its entanglement properties. 
We define the tensor-network state $|\Psi(\Theta)\rangle$ as unitarily embedded random tensor-network states with variational parameters $\Theta$ (as shown in Fig.~\ref{illustration_peps} (b)). 
\begin{theorem}
\label{theorem:global} 
Denoting the derivative of the global loss function with respect to the variational parameter $\theta_k^{(i)}$ by $\partial_{k}^{(i)}\mathcal{L}_g$,  then  $\forall \;\theta_k^{(i)} \in \Theta$, $\partial_k^{(i)}\mathcal{L}_g$ obeys the following inequality:
\begin{equation}\label{Prob:global}
{\rm Pr}\left(|\partial_k^{(i)}\mathcal{L}_g|>\epsilon\right)\leq O\left(\frac{d^{-V}}{\epsilon^2}\right),
\end{equation}
where $\epsilon$ is an arbitrary constant and $\rm Pr(\cdot)$ denotes the probability.

\end{theorem}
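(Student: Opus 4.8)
The strategy is the standard variance route to a barren plateau, executed with the same spin/polyomino machinery used for Theorem~\ref{theorem:normalized}. First I would show that the gradient has vanishing mean: writing the parametrized local unitary at site $i$ so that $\theta_k^{(i)}$ generates an elementary rotation by a Hermitian generator, the derivative $\partial_k^{(i)}\mathcal{L}_g$ inserts a single copy of that generator inside one factor $U_i$, and integrating this lone factor against the Haar measure as a $1$-design leaves only a state-independent trace term that can be fixed to zero, so $\overline{\partial_k^{(i)}\mathcal{L}_g}=0$. Consequently $\mathrm{Var}(\partial_k^{(i)}\mathcal{L}_g)=\overline{(\partial_k^{(i)}\mathcal{L}_g)^2}$, and Chebyshev's inequality gives ${\rm Pr}(|\partial_k^{(i)}\mathcal{L}_g|>\epsilon)\le \overline{(\partial_k^{(i)}\mathcal{L}_g)^2}/\epsilon^2$. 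It therefore suffices to prove $\overline{(\partial_k^{(i)}\mathcal{L}_g)^2}=O(d^{-V})$.

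Next I would reduce this second moment to a partition function. Expanding $(\partial_k^{(i)}\mathcal{L}_g)^2=(\partial_k^{(i)}\|\langle\Psi|\phi\rangle\|^2)^2$ and integrating each local tensor against its second moment $M_2(dU_H)$ from Eq.~(\ref{eq_twodesign}) yields precisely the two-layer Ising model of Fig.~\ref{illustration_Methods}(b). Relative to the norm computation there are only two changes: the dangling physical legs are contracted with the fourfold global state $|\phi\rangle^{2\otimes\bar{2}}$ rather than with the uniform field $h_z$, and the site $i$ hosting the derivative carries a generator insertion whose net effect is to multiply its local weight by an $O(1)$ factor. I would record the resulting single-site weights $f_\phi(\vec{\sigma}^{(2)})$ produced by contracting the upper-layer spins against $|\phi\rangle$, and reorganize the sum into the excited-spin-string form of Eq.~(\ref{partition_function_excited}).

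The crucial — and hardest — step is to show that the global external field replaces the $O(1)$ all-$\downarrow$ vacuum of the norm problem by a configuration whose weight is $\sim d^{-V}$. Whereas the identity observable underlying the norm is saturated by the unexcited vacuum (value $1$), a fixed normalized state $|\phi\rangle$ living in the full $d^V$-dimensional space can only be registered once \emph{every} site is excited, so diagrammatically the global projector forces the surviving configurations to be essentially fully excited; the dominant weight is then governed by the area factor $q_a^{V}$ with $q_a=f(\uparrow,\uparrow,\uparrow)\lesssim 1/d$ from Lemma~\ref{lem_area_and_perimeter_bound}, giving $q_a^{V}\lesssim d^{-V}$. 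I would make this precise by bounding the global-field weights $f_\phi$, factoring the common $d^{-V}$ out of the whole partition function, and bounding the residual sum over the remaining excited (polyomino) configurations exactly as in Eqs.~(\ref{partition_upperbound})–(\ref{partition_upperbound_1}), using Lemma~\ref{lem_area_and_perimeter_bound} and the convergent generating function $G(q_a/\eta,p_u)$; the generator insertion at site $i$ and the bridge transformation contribute only $O(1)$ and polynomial-in-$V$ prefactors absorbed into the $O(\cdot)$.

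The main obstacle I anticipate is the careful diagrammatic bookkeeping of the global-state contraction: establishing the modified weights $f_\phi$, proving that the fully excited configuration indeed dominates with the clean $q_a^{V}$ scaling even when $|\phi\rangle$ is entangled (so that the field does not factorize over sites), and checking that the generator insertion does not upset the exponential estimate. Once the overall $d^{-V}$ is extracted, the convergence of the residual polyomino sum is identical to the norm analysis, and combining it with the Chebyshev bound of the first paragraph delivers Eq.~(\ref{Prob:global}).
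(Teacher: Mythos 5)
Your scaffolding (zero mean, Chebyshev, reduction of the second moment to the two-layer spin model of Fig.~\ref{illustration_Methods}(b)) matches the paper, but your central step --- that the global state forces the surviving configurations to be ``essentially fully excited'' with dominant weight $q_a^{V}\lesssim d^{-V}$ --- is false, and it fails exactly at the point you flagged as the main obstacle. After the 2-design integration, the dangling physical legs are contracted with $(|\phi\rangle\langle\phi|)^{\otimes 2}$ through a product of identity/swap pairings fixed by the spin configuration; if $S$ denotes the set of swapped (excited) sites, this contraction equals the purity $\tr\!\left[\rho_S^2\right]$ of the reduced state $\rho_S=\tr_{S^c}|\phi\rangle\langle\phi|$. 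This is never zero, it equals $1$ both for $S=\emptyset$ and for $S$ the whole lattice, and it equals $1$ for \emph{every} $S$ when $|\phi\rangle$ is a product state. So the all-$\downarrow$ vacuum is not killed by the global boundary condition, and no configuration is forced to be excited. Moreover, the weight $q_a=f(\uparrow,\uparrow,\uparrow)$ you import from Lemma~\ref{lem_area_and_perimeter_bound} belongs to the norm calculation, whose per-site weights already contain the physical trace factors generated by the uniform field $h_z=\log d$; those factors are absent here. With the correct (field-free) single-site weights $g$, both the vacuum and the fully excited configuration carry weight of order $g^{V}\approx d^{-2V}$ (times a purity factor $\leq 1$): no single configuration has weight of order $d^{-V}$, and the $d^{-V}$ in the theorem emerges only after summing the $2^{V}$ configurations and using $2\leq d$.

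The paper's proof uses the opposite of your plan: instead of computing modified per-site weights $f_\phi$ (which indeed do not factorize for entangled $|\phi\rangle$), it decouples $|\phi\rangle$ entirely. By Cauchy--Schwarz, the variance is bounded by a system-size-independent constant $C_k^{(i)}$ (absorbing the derivative-site contraction and the state dependence) times the partition function of the field-free two-layer Ising model, whose single-site weights satisfy $g(\cdot)\leq \frac{D^4-1/d}{(D^2d)^2-1}$. A crude count then suffices: there are $2^{L^2}$ upper-layer configurations, each of weight at most $\left(\frac{D^4-1/d}{(D^2d)^2-1}\right)^{L^2-1}$, and since $2(D^4-1/d)<D^4d^2-1$ the sum is $O(d^{-V})$; Chebyshev finishes. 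Note that no polyomino enumeration or generating function is needed for this theorem (unlike Theorems~\ref{theorem:normalized} and~\ref{theorem:local_loss}), precisely because once the field is removed every configuration --- not just the excited ones --- is exponentially suppressed. As written, your program cannot be completed: for entangled $|\phi\rangle$ the weights $f_\phi$ you want to tabulate do not exist site-by-site, and for product $|\phi\rangle$ your claimed vacuum suppression is simply wrong.
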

For the expectation of such a derivative, it is straightforward to obtain $\overline{\partial_k^{(i)} \langle \uPEPS (\Theta) | \phi \rangle \langle \phi |\uPEPS (\Theta) \rangle } = 0$. Here we focus on the variance of such a derivative 
    \begin{equation}
    \begin{aligned}
               \var(\partial_k^{(i)}\mathcal{L}_g) &=\langle(\partial_k^{(i)}\mathcal{L}_g)^2\rangle - \langle\partial_k^{(i)}\mathcal{L}_g\rangle^2\\
               &= \overline{\left[\partial_\theta \langle \uPEPS (\Theta) | \phi \rangle \langle \phi |\uPEPS (\Theta) \rangle \right]^2} .
    \end{aligned}
    \end{equation}
    If we integrate each local unitary independently, according to Eq.~(\ref{eq_twodesign}) and Fig.~\ref{illustration_Methods} (c), the variance has the following spin representation, 
    \begin{equation}
    \begin{aligned}
            \label{upper_bound_cauchy_schawrz}
        \var(\partial_k^{(i)}\mathcal{L}_g) &\leq& \ipic{square_LtimesL_variable}{0.3} C_k^{(i)},
    \end{aligned}
    \end{equation}   
    with a constant factor $C_k^{(i)}$. The constant factor comes from the contraction result of the site hosts derivative parameter, which is independent of the system size $L$, and we can upper bound it by a constant value. We further apply the Cauchy-Schwarz inequality\cite{kliesch2019Guaranteed}, so that the partition function of the spin model can be upper bounded by:
        \begin{equation}
    \begin{aligned}
            \label{upper_bound_cauchy_schawrz}
        \var(\partial_k^{(i)}\mathcal{L}_g) \leq \ipic{square_LtimesL}{0.3} C_k^{(i)}.
    \end{aligned}
    \end{equation} 
     The right term in Eq.~(\ref{upper_bound_cauchy_schawrz}) presents the partition function of a classical two-layer Ising model without an external field. The corresponding Hamiltonian is:
    \begin{equation}
    \begin{aligned}
        \globalH(\vec{\sigma}) =& \sum_{x,y} \globalH_{x,y}\left(\sigma_{x,y}^{(1)}, \sigma_{x,y}^{(2)}, \sigma_{x,y+1}^{(2)},\sigma_{x+1,y}^{(2)} \right)\\
       =&\sum_{x,y} \frac{1}{2} \left[\sigma_{x,y}^{(1)}(\sigma_{x,y+1}^{(2)} +\sigma_{x+1,y}^{(2)})J_1-\sigma_{x,y}^{(1)}\sigma_{x,y}^{(2)} J_2\right],
    \end{aligned}
    \end{equation}
    where the Ising coupling coefficient $J_2 = \text{i}\pi + \text{log}(D^2 d)$, $J_1 = - \text{log}(D)$. Our problem can be transferred to solve the partition function of the Ising model:
    \begin{equation}
        Z_g = \frac{1}{C}\sum_{\vec{\sigma}}e^{-\globalH(\vec{\sigma})},
    \end{equation}
    where $C =  [\frac{\text{i} D^2 d^{-\frac{1}{2}}}{(D^2 d)^2 -1 }]^{V}$. Although this kind of Ising model is exactly solvable, we can simply obtain an upper bound of its partition function without resorting to its exact solution. To this end, we  define:
    \begin{equation}
    \begin{aligned}
        g(\sigma_{x,y}^{(2)}, \sigma_{x,y+1}^{(2)},\sigma_{x+1,y}^{(2)})\coloneqq \ipic{non_external_field}{0.4}, 
    \end{aligned}
    \end{equation}
    where the corresponding partition function can be written as:
    \begin{equation}
        Z_g = \frac{1}{C}\sum_{\vec{\sigma}}e^{-\globalH(\vec{\sigma})} = \sum_{\vec{\sigma}}\prod_{x, y}g(\vec{\sigma}_{x,y}^{(2)}, \vec{\sigma}_{x, y+1}^{(2)}, \vec{\sigma}_{x+1, y}^{(2)}).
    \end{equation}
    Direct calculations lead to:
    \begin{subequations}
         \begin{align}
        g(\downarrow,\downarrow,\downarrow) = g(\uparrow,\uparrow,\uparrow) =\frac{D^4 - 1/d}{(D^2d)^2 - 1},\\
        g(\downarrow,\downarrow,\uparrow) = g(\uparrow,\downarrow,\uparrow) = g(\downarrow,\uparrow,\downarrow) =  \frac{D^3 - D/d}{(D^2d)^2 - 1},\\
        g(\uparrow,\downarrow,\downarrow) = g(\downarrow,\uparrow,\uparrow) =\frac{D^2 - D^2/d}{(D^2d)^2 - 1},
    \end{align}
    \end{subequations}
    where the maximum value for a single site is $(D^4 - 1/d)/\left[(D^2d)^2 - 1\right]$. Hence, once we set the spin configuration of the upper layer, the partition function is smaller than $\left\{(D^4 - 1/d)/\left[(D^2d)^2-1\right]\right\}^{L^2-1}$. The number of the spin configurations of the upper layer is $2^{L^2}$. As a result, the summation in the right side of Eq. (\ref{upper_bound_cauchy_schawrz}) is bounded by:
    \begin{equation}
        \var(\partial_k^{(i)}\mathcal{L}_g) \leq C_k^{(i)} 2^{L^2}\left(\frac{D^4 - 1/d}{(D^2d)^2-1}\right)^{L^2-1},
    \end{equation}
    where we have $2(D^4-1/d)<D^4d^2-1$. At last, by using the Chebyshev's inequality, we conclude the Eq.~(\ref{Prob:global}).

The aforementioned Theorem \ref{theorem:global} has two interpretations. First, similar to the results in the one-dimensional case~\cite{liu2022presence}, the global loss function for the high-dimensional tensor networks also suffers from the barren plateau problem, namly the probability that the gradient along any direction is non-zero to any precision $\epsilon$  vanishes exponentially with respect to the number of physical sites. Second, the results demonstrate a connection between the trainability of tensor-network states and their global typicality. Global typicality refers to the phenomenon in which a quantum system, when prepared in a random state, has properties that are typical of most states chosen at random~\cite{Silvano2010Typicality}. Specifically, as the system size increases and becomes equilibrated, the expected values of observables in a random state converge to their typical values with a high degree of probability, which demonstrates the same characteristics of the barren plateaus phenomenon. 

\subsection{Local loss function}
\label{Sec:local_loss_function}

\begin{figure}[htp]
    \centering
    \includegraphics[scale=0.3]{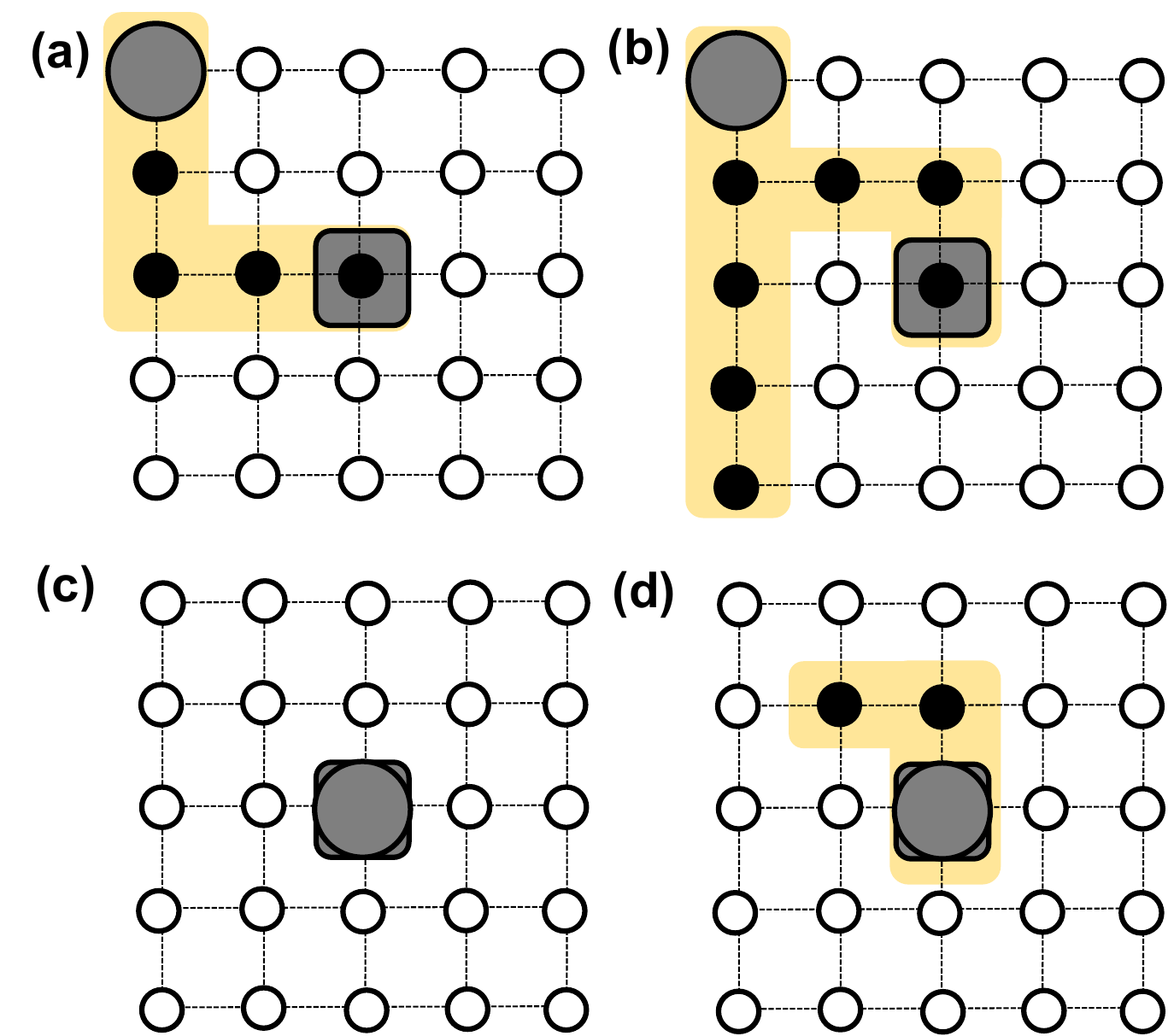}
    \caption{{Spin configurations for local loss functions.}  (a) the ground state spin configurations in the off-site case, where the grey discs represent the site hosts the derivative parameter, and the grey squares denote the site hosts the local observable. The ESS can only exist where it starts from the site hosts the derivative and ends at the site hosts the local observable. (b) the excited states spin configurations in the off-site case, where there exists at least one ESS whose length is larger than the $L$. (c) the ground state configurations in the on-site case, where the local observable acts on the same site that hosts the derivative parameter.  (d) the excited states spin configurations in the on-site case. The site that hosts the local observable can be the root of an ESS.}
    \label{local_spin_configurations}
    \end{figure}

Now, we consider the local loss function in the following form:
\begin{equation}
    \mathcal{L}_l = \langle \Psi(\Theta) | \hat{O}_i| \Psi(\Theta)\rangle,
\end{equation}
where $\hat{O}_i$ represents a local observable acts on the site $(x,y)$. Without loss of generality, we treat $\text{Tr}(\hat{O}^2)$, the physical dimension $d$, and the bond dimension $D$ as constant values and independent of the system size. We consider two cases for this problem. One is that two distinct sites host the derivative parameter and the local observable, i.e., the off-site case. The second one is that the same site hosts the derivative parameter and the local observable, i.e., the on-site case.

\subsubsection{The off-site case}

\begin{theorem}\label{theorem:local_loss} 
Denote the derivative of the local loss function $\mathcal{L}_l$ with respect to the parameter $\theta_k^{(i)}$ by $\partial_k^{(i)} \mathcal{L}_l$. Then $\forall\; \theta_k^{(i)}\in \Theta$, the variance of $\partial_k^{(i)} \mathcal{L}_l$  can be upper bounded by an exponentially decaying function with respect to the distance $\Delta_{\text{min}}$ between the site hosting observable and the site hosting the derivative parameter:
\begin{equation}
    \label{local_loss}
    \text{Pr}\left[|\partial_{k}^{(i)} \mathcal{L}_l| > \epsilon \right] \leq \mathcal{O} 
    \left(\frac{\kappa_l^{\Delta_{\text{min}}}}{\epsilon^2}\right),
    \end{equation}
    where the decay factor $\kappa_l = 0.93$.
\end{theorem}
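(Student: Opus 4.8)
The plan is to follow the same three-stage strategy used for Theorem~\ref{theorem:normalized} and Theorem~\ref{theorem:global}: first reduce the variance to a second moment, then map the Haar integral to a two-layer Ising partition function, and finally bound that partition function with the directed-polyomino generating function of Eq.~(\ref{equ_D}). Since the derivative has vanishing mean (as in the global case, following \cite{liu2022presence}), one has $\var(\partial_k^{(i)}\mathcal{L}_l)=\overline{(\partial_k^{(i)}\mathcal{L}_l)^2}$, and applying the second-moment identity Eq.~(\ref{eq_twodesign}) at every site turns this into a sum $\sum_{\vec\sigma}\mathcal{A}[\vec\sigma]$ over spin configurations of the upper layer, exactly as in Fig.~\ref{illustration_Methods}(c). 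The only structural difference from the norm computation is the external field: at the bulk sites the physical legs are traced out and reproduce the uniform field of Eqs.~(\ref{single_integration1}--\ref{single_decay3}), whereas the observable site carries the field dictated by $\hat O\otimes\hat O$ and the derivative site carries the insertion coming from $\partial_{\theta_k^{(i)}}U_i$.

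Next I would pin down which configurations survive. The traceless condition $\text{Tr}(\hat O)=0$ annihilates every configuration whose spins at the observable site are all $\downarrow$, forcing that site to be excited; likewise the derivative insertion forces the parameter site to be the root of an ESS, which is the content of Fig.~\ref{local_spin_configurations}(a,b). By Properties~\ref{Properties_configurations}(1) an excited spin must propagate a neighbouring excitation to its right or below, so the two special sites cannot remain isolated excitations: any non-vanishing $\vec\sigma$ must contain a directed ESS that starts at the derivative site and terminates at the observable site. Because this ESS is a connected directed polyomino joining the two sites, its area obeys $m\ge \Delta_{\min}$, where $\Delta_{\min}$ is the lattice distance between them.

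With the area lower bound in hand, I would invoke Lemma~\ref{lem_area_and_perimeter_bound} to write $\mathcal{A}[\vec\sigma]\le q_a^m q_u^n$ and then reuse the toric-to-plane bridge transformation of Appendix~\ref{proof_theorem1} to replace $\widetilde D_{m,n}$ by $\text{poly}(\Delta_{\min})\,D_{m,n}$. The surviving sum now ranges over $m\ge\Delta_{\min}$ instead of $m\ge L$, so the same rescaling trick gives
\begin{equation}
\sum_{m\ge\Delta_{\min},\,n}\widetilde D_{m,n}\,q_a^{m}q_u^{n}\le \text{poly}(\Delta_{\min})\,\eta^{\Delta_{\min}}\,G(q_a/\eta,q_u),
\end{equation}
with $\eta\in(0,1)$ chosen so that $G(q_a/\eta,q_u)$ converges and so that the polynomial prefactor is absorbed; for $d,D\ge 2$ this can be done with $\kappa_l=0.93$. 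The factor $D^{-2}$ is extracted from the single upper-perimeter edge the ESS must spend to leave the anchoring site (an explicit $q_u\lesssim D^{-2}$), together with the constant prefactor fixed by $\text{Tr}(\hat O^2)$ and the local tensor at the derivative site. Collecting these bounds gives $\var(\partial_k^{(i)}\mathcal{L}_l)\le O(\kappa_l^{\Delta_{\min}}D^{-2})$, and Chebyshev's inequality then yields Eq.~(\ref{local_loss}).

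The main obstacle I anticipate lies in the configuration-counting step rather than the generating-function estimate. Unlike the norm case, where every ESS is forced to wind around the torus and therefore has area at least $L$ automatically, here the excitation is anchored at two prescribed sites, so I must argue carefully that (i) no additional free-floating wrapping ESS is required for a non-vanishing contribution, (ii) the directedness of the polyomino is compatible with the two sites placed in an arbitrary relative position on the torus, which may require splitting into the cases where the observable lies to the lower-right of the derivative site versus elsewhere and exploiting periodicity, and (iii) the bridge transformation still applies when the root is fixed at the derivative site, so that both the replacement $\widetilde D_{m,n}\to \text{poly}(\Delta_{\min})D_{m,n}$ and the lower limit $m\ge\Delta_{\min}$ are simultaneously justified.
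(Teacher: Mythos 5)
Your overall route is the paper's route (vanishing mean, spin-configuration mapping as in Fig.~\ref{illustration_Methods}(c), an ESS anchored at the derivative and observable sites, the directed-polyomino generating function, Chebyshev), but there is a genuine gap at precisely the point you flag as obstacle (i), and it is not an optional refinement. Your central counting claim --- that \emph{every} non-vanishing configuration contains a directed ESS running from the derivative site to the observable site --- is false. Since $\mathrm{Tr}(\hat O)=0$ forces the observable site to be excited, and Lemma~\ref{lemma:supplementary_derivative_near} forces an excitation adjacent to the derivative site, there remain non-vanishing configurations in which the ESS emanating from the derivative site never reaches the observable site: it closes into a cycle winding around the torus (so it has area at least $L$), while the observable site carries its own terminating excitation. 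These are exactly the configurations of Fig.~\ref{local_spin_configurations}(b). The paper handles them by splitting $Z_l=Z_l^{(1)}+Z_l^{(2)}$: your analysis covers $Z_l^{(1)}$ (the connecting ESS, area $\ge\Delta_{\min}$), while $Z_l^{(2)}$ (wrapping ESSs) is bounded by the same machinery as Theorem~\ref{theorem:normalized}, giving a contribution $O(0.97^L)$ that is then dropped under the regime $L\gg \Delta_{\min}$. Without this second term your bound on the variance simply does not follow from the configurations you enumerated.

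A secondary, quantitative problem: you invoke the bridge transformation to replace $\widetilde D_{m,n}$ by $\mathrm{poly}(\Delta_{\min})\,D_{m,n}$, but the appendix's prefactor is $\mathrm{poly}(L)$ (an $L^2$ location freedom per component), not $\mathrm{poly}(\Delta_{\min})$, and in fact no prefactor is needed here: because the connecting ESS has two \emph{fixed} endpoints, it maps injectively to a plane polyomino rooted at a fixed site (step 4 of the transformation alone), so $D_{m,n}$ bounds the count directly. This matters for the constant. The paper obtains $\kappa_l=0.93$ from the clean estimate $(1/d)^m\le (1.86/d)^{\Delta_{\min}}(0.54)^m$ for $m>\Delta_{\min}$, $d=2$, together with $G(0.54,1/D^2)\le O(1/D^2)$. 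If you instead absorb a polynomial prefactor by lowering $\eta$, you must evaluate $G(q_a/\eta,\,q_u)$ at an area parameter strictly larger than $0.54$, and convergence there is delicate: at $q=0.54$, $p=0.25$ the denominator in Eq.~(\ref{equ_D}) is
\begin{equation}
1-q(2+p)+q^2(1-p)\approx 0.004,
\end{equation}
already barely positive, so pushing $q$ up risks leaving the domain of convergence entirely. Your extraction of the $D^{-2}$ factor (every nonempty polyomino has upper perimeter $n\ge 1$, hence at least one factor $q_u\lesssim D^{-2}$) is fine and matches the paper's $G(0.54,1/D^2)\le O(1/D^2)$.
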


It is straightforward to obtain that the expectation of such a derivative vanishes, similar to the one-dimensional case \cite{liu2022presence}. Therefore, we only need to focus on the variance of such a derivative:
 \begin{equation}
    \begin{aligned}
               \var(\partial_k^{(i)}\mathcal{L}_l) 
               &= \overline{(\partial_\theta \langle \Psi(\Theta) | \hat{O}_i| \Psi(\Theta)\rangle )^2}.
    \end{aligned}
    \end{equation}
{Similar techniques can be used to map the calculation of the variance to the partition function calculation of the Ising model on a lattice of size $L\times L$. With the exception of the site that hosts the derivative parameter, the spin model is the same as that in Section~\ref{local_typicality}. However, when considering the site that hosts the derivative parameter, we observe that the spin configuration must satisfy a specific restriction to obtain a non-zero configuration:} 
\begin{lemma}\label{lemma:supplementary_derivative_near} 
Consider a  $L\times L$ spin grid, where the site that hosts derivative parameter locates at $(x,y)$. Then, one of the spins at $(x,(y+1)\%L)$ or $((x+1)\%L, y)$ should be $\SU$ to obtain a non-zero configuration.
\end{lemma}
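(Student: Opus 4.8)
The plan is to reduce the statement to a single-site computation at the site $(x,y)$ that carries the variational parameter. Recall from the norm analysis that integrating out the bottom-layer spin at an ordinary site produces the weight $f(\sigma^{(2)}_{x,y},\sigma^{(2)}_{x,y+1},\sigma^{(2)}_{x+1,y})$ of Eq.~(\ref{single_site_contraction}), and that Property~\ref{Properties_configurations}(1), i.e.\ the vanishing $f(\SU,\SD,\SD)=0$ of Eq.~(\ref{single_integration_zero}), already encodes the directedness of the excited spin strings. In the variance of the local loss, the only site whose local tensor is altered is the derivative site: differentiating with respect to $\theta^{(i)}_k$ inserts the corresponding Hermitian generator into the replica copies of $U_i$ sitting at that site, replacing $f$ by a derivative-modified weight $\tilde f(\sigma^{(2)}_{x,y},\sigma^{(2)}_{x,y+1},\sigma^{(2)}_{x+1,y})$. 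I would therefore prove the lemma by establishing the single-site identity $\tilde f(\sigma,\SD,\SD)=0$ for both $\sigma\in\{\SU,\SD\}$, since this forces the contribution of any configuration in which both forward neighbors $(x,(y+1)\%L)$ and $((x+1)\%L,y)$ point $\SD$ to vanish.

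To prove $\tilde f(\sigma,\SD,\SD)=0$ I would evaluate the derivative-site diagram directly. Fixing both forward bonds to $\SD$ pins the outgoing virtual legs to the trivial $|\vec 0\rangle$-type contraction, so the environment seen by the inserted generator collapses to the identity channel (the same contraction that yields $f(\SD,\SD,\SD)=1$ and $f(\SU,\SD,\SD)=0$ in the norm case). The derivative then contributes a trace of the traceless generator against this identity environment, which is identically zero; equivalently, the $\SD\SD$-forward sector reproduces the constant normalization piece of the site, whose $\theta^{(i)}_k$-derivative vanishes. Both the $\sigma=\SD$ and $\sigma=\SU$ center-spin cases collapse to this same traceless trace, so $\tilde f(\sigma,\SD,\SD)=0$ in each case, and the lemma follows: at least one of $(x,(y+1)\%L)$ and $((x+1)\%L,y)$ must be $\SU$.

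The step I expect to be the main obstacle is the explicit diagrammatic evaluation of $\tilde f$: one has to distribute the parameter derivative over the replica copies of $U_i$ via the product rule and check that, after imposing the $\SD\SD$ forward boundary and summing over the bottom-layer spin, every resulting term reduces to the trace of a traceless operator, so that the cancellation is exact rather than merely leading-order in $D,d$. Once this vanishing is secured, recognizing the $\SD\SD$ boundary as the identity contraction and concluding via the directedness already present in $f(\SU,\SD,\SD)=0$ are routine, and the toric indices $\%L$ enter only through the labelling of the two forward neighbors.
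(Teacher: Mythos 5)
Your proposal is correct and follows essentially the same route as the paper: the paper's proof of this lemma, Eq.~(\ref{derivative_term}), is exactly the single-site computation you describe, namely the product-rule expansion of the derivative at site $(x,y)$ written as an alternating sum over $\alpha,\beta\in\{0,1\}$ with signs $(-1)^{\alpha+\beta+1}$, which vanishes once both forward spins $\sigma_{x,(y+1)\%L}$ and $\sigma_{(x+1)\%L,y}$ are fixed to $\SD$, because the identity-paired boundary makes the four insertion terms equal (equivalently, the $\SD\SD$ sector reproduces a $\theta_k$-independent contraction whose derivative is zero). One small correction: the cancellation is not ``a trace of the traceless generator against the identity environment'' --- $G$ is not assumed traceless in the paper --- but rather a trace of a commutator, $\tr[G A]-\tr[A G]=0$ by cyclicity, which is precisely your alternative ``derivative of a constant'' formulation, so that is the version of the argument you should keep.
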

\begin{proof}
The proof of this lemma explores the same idea as in Ref. \cite{liu2022presence}. We define the unitary operators $U(\bm{\theta})$ generated by a sequential parameterized unitary gates: 
\begin{equation}
    U(\bm{\theta})=\prod_{\xi=1}^{{\rm Poly}(D^2 d)}e^{i\theta_{\xi}G_\xi}=U_{-}U_{+}, 
\end{equation}
where $G$ is a hermitian generator, and the corresponding parameter set is $\bm{\theta}=\{\theta_\xi\}$. We note that the parameters are chosen from a set of real random numbers. Later, we ignore the index $\xi$. The derivate of $\left[U(\bm{\theta})\otimes U^\dagger(\bm{\theta})\right]^{\otimes 2}$ with respect to the parameter $\theta$ is:
\begin{equation}
        \partial_{\theta} \left[
        \begin{array}{cc}
             & U(\bm{\theta}) \\
             & \otimes \\
             & U^\dagger(\bm{\theta})
        \end{array} \right]^{\otimes 2}=\sum_{\alpha,\beta = 0,1} (-1)^{\alpha + \beta + 1}\left(\begin{array}{cc}
             & U_-G^{\alpha}U_+ \\
             & \otimes \\
             & U_-^\dagger G^{1-\alpha}U_+^\dagger\\
             &\otimes\\
             &U_-G^{\beta}U_+\\
             & \otimes \\
             &U_-^\dagger G^{1-\beta}U_+^\dagger\\
        \end{array}
        \right),
\end{equation}
where the $\otimes$ acts on its upper and lower sides term. For a 2D grid, if we consider the spin value of $\sigma_{x,y+1} = \downarrow$ and $\sigma_{x+1,y}=\downarrow$, the integration in the site hosts the derivative parameter has the following relations:
\begin{equation}
    \label{derivative_term}
\begin{aligned}    
            &\sum_{\alpha,\beta=0,1} (-1)^{\alpha+\beta+1}\int dU_-dU_+ 
            \left[
            \begin{array}{cc}
                 &  (U_-G^{\alpha}U_+U_+^\dagger G^{1-\alpha}U_-^\dagger)\\
                 & \otimes \\
                 & (U_-G^{\beta}U_+U_+^\dagger G^{1-\beta}U_-^\dagger)\\
            \end{array} 
            \right]\\
            &=\sum_{\alpha,\beta=0,1} (-1)^{\alpha+\beta+1}\int dU_-dU_+ \left(
            \begin{array}{cc}
                 & U_-GU_-^\dagger \\
                 & \otimes \\
                 &U_-G U_-^\dagger
            \end{array}
            \right)\\
    &=0.
\end{aligned}
\end{equation}
 Hence, to obtain a non-zero configuration, one of the spins in $\left(x,(y+1)\%L\right)$ or $\left((x+1)\%L, y\right)$ should be $\SU$.
\end{proof}
Given the above lemma, we treat the partition function as a summation over two different kind of typical configurations $Z_l = Z_l^{(1)} + Z_l^{(2)}$. As shown in Fig.~\ref{local_spin_configurations} (a), the first configuration is that the ESS begins with the site hosting the observable and ends at the site hosting the derivative parameter. We can map this configuration to a plane polyomino directly (see the step 4 of the bridge transformation in Appendix~\ref{Toric_2_plane}), and the ESS has two definite endpoints. The minimal length of such an ESS is $\Delta_{\text{min}}$. Hence, we can bound this part of the partition function as:
\begin{equation}
\begin{aligned}
    Z_l^{(1)} &\leq O\left[\frac{1}{D^2} \sum_{m>\Delta_{\text{min}},n} D_{m,n}\left(\frac{1}{d}\right)^m\left(\frac{1}{D^2}\right)^n\right]\\
    & \leq O\left[\frac{1}{D^2} \left(\frac{1.86}{d}\right)^{\Delta_{\text{min}}}\sum_{m>\Delta_{\text{min}},n}D_{m,n}\left(0.54\right)^{m}\left(\frac{1}{D^2}\right)^n\right]\\
    & \leq O\left[\frac{1}{D^2} 0.93^{\Delta_{\text{min}}}G\left(0.54,\frac{1}{D^2}\right)\right] \leq O\left(0.93^{\Delta_{\text{min}}}\right).
\end{aligned}
\end{equation}

The second type of configuration in Fig.~\ref{local_spin_configurations} (b) corresponds to an ESS in the graph that does not end up with the site hosts local observable. In this case, each ESS at least has a length larger than $L$. This part of the partition functions $Z_l^{(2)}$ has the same bound with the Theorem \ref{theorem:normalized}. Indeed, we can ignore the effect of the term $Z_l^{(2)}$ when $L \gg \Delta_{\text{min}}$, and then, by using the Chebyshev's inequality, we prove the theorem.  As for a finite virtual bond dimension $D$, the derivative is upper bounded by an exponentially small number with respect to the minimal distance between the site hosting observable and the site hosting derivative parameter.

\begin{figure*}
    \centering
    \includegraphics[scale=0.2]{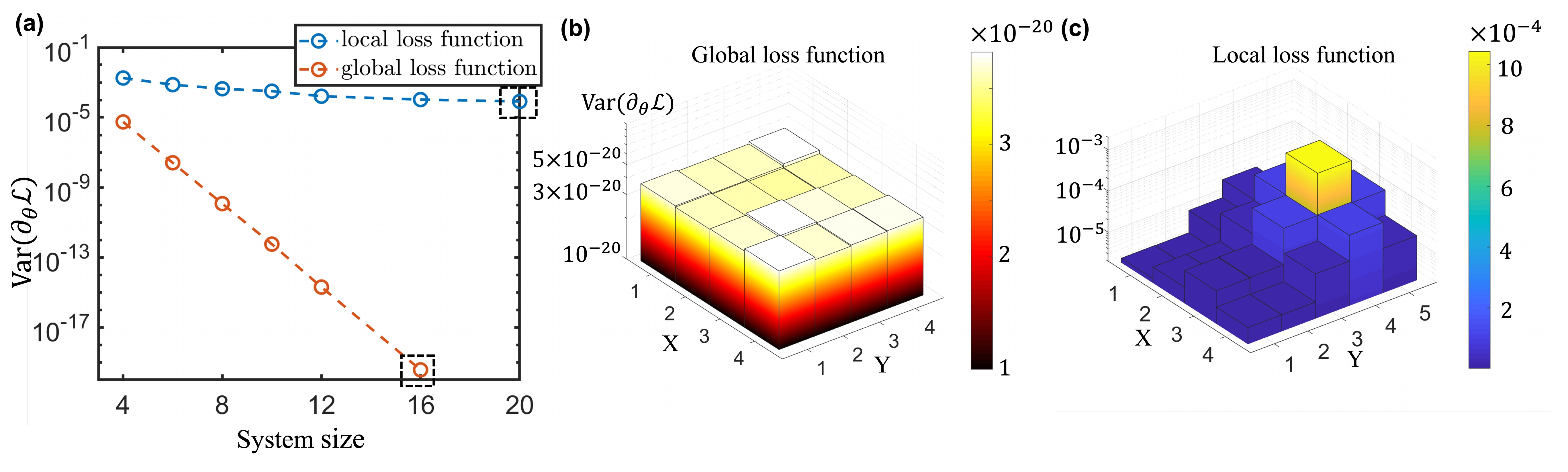}
    \caption{{Numerical results for global (local) loss function of tensor-network state.} (a) plots the average variance of $\partial_k\mathcal{L}_g$ and  $\partial_k\mathcal{L}_l$ with different sizes of the tensor networks, including $\{(2\times 2),(2 \times 3),(2\times 4),(2\times 5),(3\times 4),(4\times 4),(4\times 5)\}$ (note that we only plot up to $(4\times 4)$ for the global loss function). For the global loss function, we plot the mean value of the variance over all the variational parameters. We also plot the maximal variance of all the variational parameters for the local loss function.  (b) the distribution of the variance in different sites for a global loss function for the tensor-network state with a size of $(4\times 4)$. (c) The distribution of the average variance in different sites for a local loss function, where the site hosts local observable obtain the maximal variance for the tensor-network state with a size of $(4\times 5)$. Here we fix the physical dimension $d=2$ and the bond dimension $D=2$.}
    \label{figure_numerical_PEPS}
    \end{figure*}

\subsubsection{The on-site case}
\begin{theorem}\label{theorem:local_loss_2} 
    Denote the derivative of the local loss function $\mathcal{L}_l$ with respect to the parameter $\theta_k^{(i)}$ by $\partial_k^{(i)} \mathcal{L}_l$. We assume that the $\text{Tr}(\hat{O}^2)$ and physical bond dimension $d$ are constant. Then $\forall\; \theta_k^{(i)}\in \Theta$, the variance of $\partial_k^{(i)} \mathcal{L}_l$  has a lower bound:
    \begin{equation}
        \label{local_loss_2}
        \text{Var}(\partial_{k}^{(i)} \mathcal{L}_l) \geq O(poly(\frac{1}{D})),
    \end{equation}
    where $\text{poly}(\frac{1}{D})$ is a polynomial of bond dimension $D$ with constant degree. 
    \end{theorem}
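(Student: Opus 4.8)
The plan is to reuse the diagrammatic second-moment machinery from the norm and off-site arguments, but to extract a lower bound on the resulting partition function instead of an upper bound. First I would write the quantity of interest as $\text{Var}(\partial_k^{(i)}\mathcal L_l)=\overline{(\partial_\theta\langle\Psi(\Theta)|\hat O_i|\Psi(\Theta)\rangle)^2}$, using that the mean derivative vanishes (inherited from \cite{liu2022presence}). Applying Eq.~(\ref{eq_twodesign}) site by site maps this to a two-layer classical Ising partition function $Z_l$ of the form already analyzed, with two local modifications: at the vertex carrying the derivative the weights are altered as in the proof of Lemma~\ref{lemma:supplementary_derivative_near}, and at the vertex carrying the observable the dangling legs are contracted against the Choi state $|\hat O_i\rangle^{\otimes 2}$ of Fig.~\ref{illustration_Methods}(c) rather than against $\SD$. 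In the on-site case these two modifications occur at the \emph{same} vertex, so the key new object is the combined derivative-plus-observable weight there.

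The crucial observation is how the on-site geometry defeats the decay seen in Theorem~\ref{theorem:local_loss}. In the off-site case the pure derivative vertex has $f(\SU,\SD,\SD)=0$, so by Lemma~\ref{lemma:supplementary_derivative_near} an excited derivative vertex is forced to have an excited right or lower neighbor; the ESS can only terminate at the special observable vertex, and this forces a string of length at least $\Delta_{\text{min}}$, producing the factor $\kappa_l^{\Delta_{\text{min}}}$. When the observable sits on the derivative vertex, the insertion of $\hat O_i$ removes the cancellation that made Eq.~(\ref{derivative_term}) vanish, so the shared vertex may be excited with both its right and lower neighbors in $\SD$ and still contribute a nonzero amplitude proportional to $\text{Tr}(\hat O^2)$. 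Hence the shared vertex can serve as the root of a minimal, length-one ESS (Fig.~\ref{local_spin_configurations}(d)), and no growing string is required.

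I would then evaluate this minimal configuration exactly: every vertex except the shared one is $\SD$ and contributes $f(\SD,\SD,\SD)=1$, the two vertices that see the single excitation on one of their right/lower legs each contribute a boundary factor $q_p\lesssim 1/D$, and the shared vertex contributes a weight $\propto \text{Tr}(\hat O^2)$ that is independent of $L$. The product is of order $\text{Tr}(\hat O^2)/D^2$, matching the claimed $O(1/D^2)$ and, crucially, not decaying with the system size. To turn this single term into a genuine lower bound on $Z_l$ I would argue that the remaining configurations do not cancel it, either by showing that every surviving summand $\mathcal A[\vec\sigma]$ is non-negative—so that discarding all but the minimal cell can only decrease the sum—or, if some weights are negative, by bounding the total negative contribution as subleading in $1/D$ relative to the minimal cell.

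The main obstacle I anticipate is exactly this sign-control step, which has no analogue in the earlier upper-bound proofs: a lower bound cannot be obtained by bounding each term in absolute value, and the traceless insertion $\hat O_i$ can in principle generate weights of either sign at the observable vertex. I must therefore compute the combined derivative-plus-observable vertex weight explicitly and verify either full non-negativity of the summand or the subleading size of the negative part. A secondary subtlety is the combinatorial bookkeeping of area, perimeter, and upper perimeter for the minimal ESS on the torus, so that the surviving power of $D$ is pinned at exactly $-2$ rather than a larger negative power; this is what fixes the stated $1/D^2$ scaling.
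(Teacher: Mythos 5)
Your proposal is correct and follows essentially the same route as the paper: the paper likewise discards all but the ground-state configuration of Fig.~\ref{local_spin_configurations}(c), in which every other site is $\SD$, and evaluates the combined derivative-plus-observable vertex by writing $U=U_-U_+$ and Haar-integrating whichever half forms a 2-design, producing the system-size-independent prefactor $\frac{D^2(1-1/d)}{(D^2d)^2-1}C_+$ (or $\frac{D^2(\text{tr}(\hat{O}^2)-\text{tr}(\hat{O})^2/d)}{(D^2d)^2-1}C_-$), i.e.\ exactly the claimed $O(1/D^2)$ with no dependence on $L$. The sign-control step you single out as the main obstacle is genuine but is passed over silently in the paper as well---it simply asserts that the partition function is lower-bounded by its ground-state term---so carrying out your planned non-negativity check (note that all the site weights in Eqs.~(\ref{single_integration1})--(\ref{single_decay3}) are indeed non-negative) would make the argument more complete than the published one rather than reveal any defect in your approach.
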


In this case, the ground state and excited state spin configurations are presented in Fig.~\ref{local_spin_configurations} (c, d). It can be easily verified that the value of each excited state configuration is non-negative. Hence, the partition function can be lower bounded by the value of the ground state configuration:
\begin{eqnarray}
     \sum_{\alpha,\beta=0,1} (-1)^{\alpha+\beta+1} \int dU_H 
     \left[
            \begin{array}{cc}
                 &  \langle 0|\text{Tr}_{D}(U_-G^{\alpha}(\hat{O}') ^\dagger G^{1-\alpha}U_-^\dagger)|0\rangle\\
                 & \otimes \\
                 &\langle 0|\text{Tr}_{D}(U_-G^{\beta} \hat{O}'G^{1-\beta}U_-^\dagger)|0\rangle\\
            \end{array} 
            \right],\nonumber
\end{eqnarray}
where $\text{Tr}_{D}(\cdot)$ traces out the bond dimension of each local tensor, and $\hat{O}' = U_+ \hat{O} U_+^\dagger$. If $U_+$ forms $2$-design, we have
\begin{equation}
\begin{aligned}
            \text{Var}(\partial_{k}^{(i)}\mathcal{L}_l) 
            \geq \frac{D^2\text{Tr}(\hat{O}^2)}{(D^2d)^2 -1}C_+,
\end{aligned}
\end{equation}
where we define 
\begin{equation*}
    C_+ = \sum_{\alpha,\beta=0,1}\int dU_H \left(\rho_- G^{1-\alpha+\beta}\rho_- G^{1-\beta+\alpha} \right),
\end{equation*}
with $\rho_- = U_-(|0\rangle \langle 0 | \otimes \mathbb{I}_{D^2})U_-^\dagger$, and $\text{Dim}(\mathbb{I}_{D^2}) = D^2\times D^2$. Similarly, if $U_-$ forms $2$-design, we have
\begin{equation}
\begin{aligned}
            \text{Var}(\partial_{k}^{(i)}\mathcal{L}_l) \geq \frac{D^2 (1-\frac{1}{d})}{(D^2d)^2 -1} C_-,
\end{aligned}
\end{equation}
and we define the factor
\begin{equation*}
    C_- = 
    \sum_{\alpha,\beta=0,1}\int dU_H \text{Tr}\left(\hat{O}'G^{1-\alpha+\beta}\hat{O}'G^{1-\beta+\alpha}\right),
\end{equation*}
where $\hat{O}' = U_+ \hat{O} U_+^\dagger$. Hence, the value of such spin configurations are independent of the system size. The partition function of the on-site case is lower bound by a constant number. Hence, the gradients do not vanish and there is no barren plateaus problem in this case.

\section{Numerical results}
\label{Numerical_Results}

Our proof mainly focuses on the scaling behavior of the properties. As a complement to our analysis, we also perform numerical benchmarks on systems of modest size. In our numerical simulations, we utilize open-source libraries , TensorFlow \cite{Abadi2016TensorFlow} and TensorNetwork \cite{roberts2019tensornetwork}.
The simulations involved a general tensor-network state with periodic boundary conditions, and the global and local loss functions were compared in Fig.~\ref{figure_numerical_PEPS} (a). The global loss function is defined as $\mathcal{L}_{g} = 1-|\langle \phi | \Psi(\Theta)\rangle|^2/Z$, where $Z = \langle \Psi(\Theta) | \Psi(\Theta) \rangle$ is a normalization factor, and $|\phi\rangle = |\phi_k\rangle^{\otimes L^2} = \left[\frac{1}{\sqrt{2}}(|0\rangle+|1\rangle)\right]^{\otimes L^2}$ is a tensor product state. On the other hand, the local loss function was defined as $\mathcal{L}_{l} = \langle \Psi(\Theta)| \mathbb{I}^{\otimes_{i=1}^{k-1}} |\phi_k\rangle\langle\phi_k|\mathbb{I}^{\otimes_{i=k+1}^{L^2}} |\Psi(\Theta)\rangle/Z$ with the same normalization factor as the global loss function, where $k$ is the site index for applying the local operator.

Based on our numerical results, we observe that as the system size increases, the global loss function exhibits an exponential decay in $\text{Var}(\partial_\theta \mathcal{L}_g)$. On the other hand, for the local loss function, $\text{Var}(\partial_\theta \mathcal{L}_l)$ converges to a finite value. Those results are consistent with Theorem~\ref{theorem:global} and Theorem~\ref{theorem:local_loss_2}. In Figure~\ref{figure_numerical_PEPS} (b) and (c), we present the distribution of $\text{Var}(\partial_\theta \mathcal{L})$ across various sites. For the global loss function, we have found that $\text{Var}(\partial_\theta \mathcal{L}_g)$ is uniformly distributed across all sites. However, for the local loss function, the site corresponding to the local observable exhibits the highest variance, where the variance reduces as the distance between the site hosting the local observable and the site hosting the derivative parameter increases. We note that the decay behavior may not strictly follow an exponential trend. Since in the 2D case, there are multiple paths between the sites that host derivative parameters and those that host local observable. Additionally, paths with lengths longer than the minimum distance between these two sites also have an impact on the results in a finite-size system. Nonetheless, our numerical results confirm the analytical results presented in Theorem \ref{theorem:local_loss}.

\section{Discussion}
\label{Discussions}
Many important questions remain unexplored and deserve further investigations.
First, it would be worthwhile to extend our random tensor network techniques to more specific models, such as the quantum generative models based on projected entangled pair states \cite{Gao2018Quantum}. This could provide further insights into the capabilities and limitations of these models for machine learning applications. Second, exploring the statistical properties of tensor-network states with different structures, including the multi-scale entanglement renormalization ansatz and tree tensor-network states, would be an interesting direction. Such studies could lead to a better understanding of the relationship between tensor-network states and their physical properties.
{Recent studies in the field of random quantum circuits~\cite{napp2022efficient, gao2021limitations, nahum2018operator, hunter2019unitary, von2018operator, zhou2019emergent} has revealed that similar challenges arise when computing the partition functions of the Ising model. Therefore, our research methods may provide more insights to enhance the current understanding of this field.}


In conclusion, we have carried out a relatively comprehensive investigation into the statistical properties of high-dimensional random tensor networks and the trainability of variational tensor networks. We develop a combinatorial method based on solving the "puzzle of polyominoes" to rigorously study statistical properties of high-dimensional random tensor networks. We have proved that for sufficiently large bond dimensions, the entanglement entropy can approach a near-maximal value, and the typicality occurs for the expectation value of a local observable. In addition, we investigate the barren plateaus for high-dimensional tensor network models and prove that such models suffer from barren plateaus for global loss functions, making their training processes inefficient. However, for local loss functions, the gradient is independent of the system size but decays exponentially with the distance between the region where the local observable acts and the site that hosts the derivative parameter. These results provide valuable insights into the behavior of variational high-dimensional tensor networks, which would provide a valuable guide for  future theoretical studies and practical applications.

\begin{acknowledgements}
We acknowledge helpful discussions with Ignacio Cirac, Xun Gao, Xiaoliang Qi, Zhengzhi Sun, Shunyao Zhang, Xiaoqi Sun, Zhiyuan Wei, and Zhiyuan Wang. 
This work was supported by the National Natural Science Foundation of China (Grants No. 12075128 and T2225008),  Shanghai Qi Zhi Institute,
the Innovation Program for Quantum Science and Technology (No. 2021ZD0301601), the
Tsinghua University Initiative Scientific Research Program, and the Ministry of Education of China.
\end{acknowledgements}

\appendix

\section{Details in the Proof of Theorem 1}
\label{proof_theorem1}
First we introduce some basic notations in graph theory.
A graph $G$ is two sets $(V, E)$.
Elements in $V$ are called vertices of $G$.
$E\subseteq V\times V$ is a set of ordered pairs of vertices.
A pair $(u, v)\in E$ is called a directed edge from $v_1$ to $v_2$.
We also call $(u, v)\in E$ an in-edge of $v$ and an out-edge of $u$.

We say two vertices $u, v\in V$ are connected by a undirected path if there is a sequence of vertices $u=v_0-v_1-\cdots-v_{t-1}-v_t=v$ such that for any $0\leq i\leq t-1$, there is an edge between $v_i$ and $v_{i+1}$ (either $(v_i, v_{i+1})$ or $(v_{i+1}, v_i)$).
This is an equivalence relation so $V$ is divided into several equivalence classes.
Each class is called a connected component of $G$. 
The definition of connected component here is called ``weak connectivity" in graph theory since we ignore the direction of edges.

We say a vertex $u\in V$ can reach $v\in V$ by a directed path if there is a sequence of vertices $u=v_0\to v_1\to\cdots\to v_{t-1}\to v_t=v$ such that $(v_i, v_{i+1})\in E$ for all $0\leq i\leq t-1$.

\subsection{From Toric Polyominoes to Plane Polyominoes}
\label{Toric_2_plane}
The definition of toric polyominoes is very similar to plane polyominoes. The difference is that a toric polyomino is a type of geometric shape that is defined on a torus, which is a doughnut-shaped object. Unlike plane polyominoes, a toric polyomino may have multiple connected components.

\begin{figure*}
    \centering
    \includegraphics[scale=0.4]{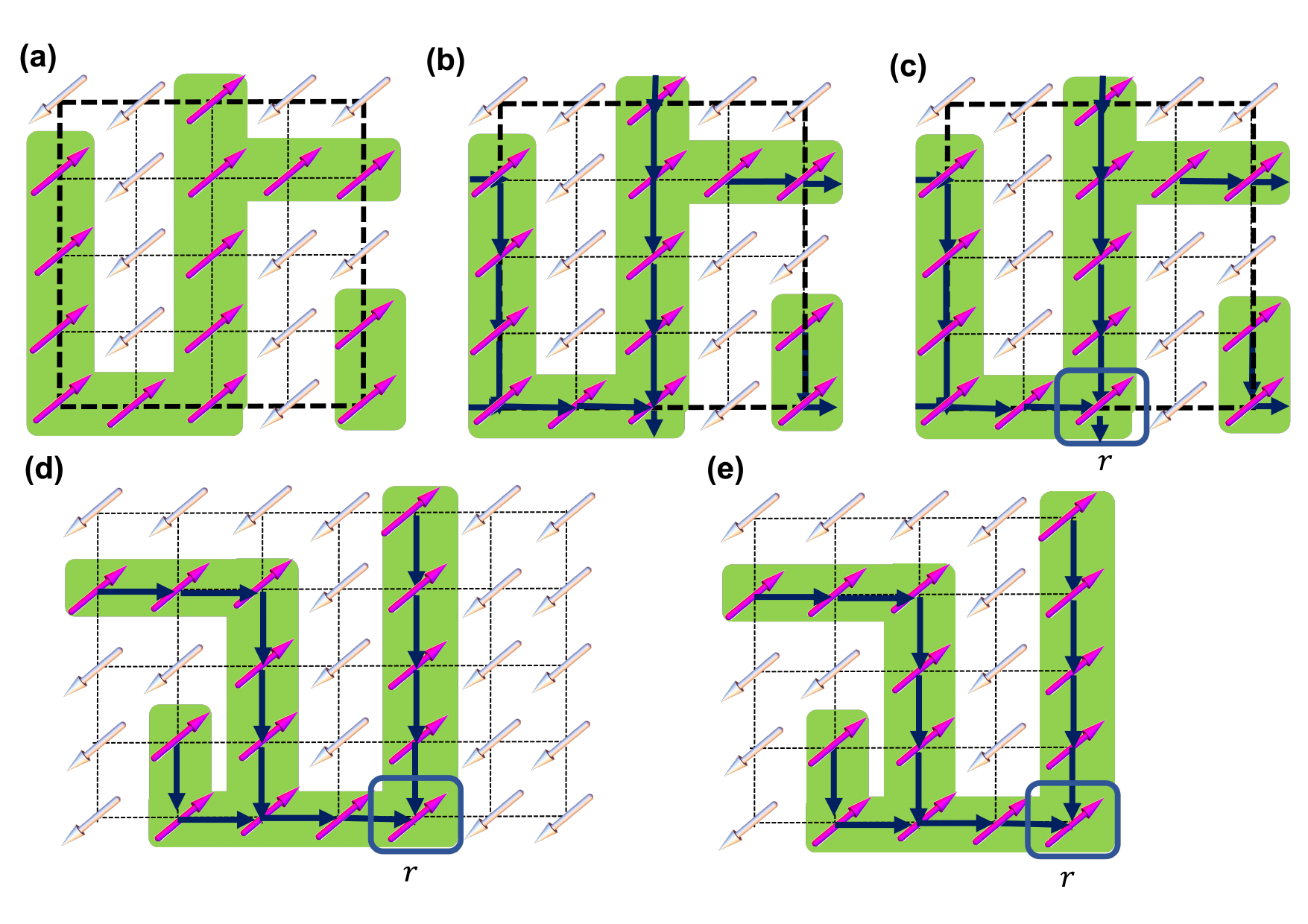}
    \caption{Illustration of the bridge transformation.
    (a) An illustration of a specific type of the toric polyomino with area $m = 14$ and upper perimeter $n = 5$. The bold dashed line represents the periodic boundary.
    (b) Construct a directed graph: For a spin-up site $(x, y)$, one of $((x+1)\%L, y)$ and $(x, (y+1)\%L)$ is spin-up. 
    If $((x+1)\%L, y)$ is spin-up, connect an edge from $(x, y)$ to $((x+1)\%L, y)$.
    Otherwise if $(x, (y+1)\% L)$ is spin-up, connect an edge from $(x, y)$ to $(x, (y+1)\% L)$.
    This example has only one connected component, so step 2 is automatically done.
    (c) Find a root: We set a root vertex $r$ such that every vertex in the connected component can reach $r$ through a directed path.  
    (d) Convert to a plane polyomino: We can map the toric polyomino by tracking back the path from the root $r$. 
    (e) Move the root to a fixed site. }
    \label{fig_bridge_transformation}
\end{figure*}

Here we introduce an algorithm, named bridge transformation, that is used to decompose connected components of a toric polyomino and transform them into plane polyominoes. Our algorithm has four steps:
\subsubsection*{Step 1: Construct a directed graph}
    Let $\vec{\sigma}$ be a toric polyomino.
    We construct a directed graph $G=(V, E)$ where the vertex set $V$ is the set of all $\uparrow$-sites.
    For an $\uparrow$-site $(x, y)$:
    \begin{itemize}
        \item If $\sigma_{x+1, y}=\uparrow$, we connect an directed edge from $(x, y)$ to $(x+1, y)$.
        \item If $\sigma_{x+1, y}=\downarrow, \sigma_{x, y+1}=\uparrow$, we connect a directed edge from $(x, y)$ to $(x, y+1)$.
    \end{itemize}    
    In this section $\vec{\sigma}$ is a always a toric polyomino, meaning that at least one of $\sigma_{x+1, y}$, $\sigma_{x, y+1}$ is $\uparrow$, so every vertex has exactly one out-edge.
    
    In the proof of Theorem~\ref{theorem:entropy} and Theorem~\ref{theorem:maximum_entropy}, a configuration $\vec{\sigma}$ may not be a toric polyomino, so it is possible that some vertex has no out-edge.

\subsubsection*{Step 2: Decompose the directed graph into connected components}
    The directed graph may have multiple connected components and we will deal with them separately.
\subsubsection*{Step 3: Find a root for every connected components}
    We claim that, in every connected component, we can find a vertex $r$ such that every vertex in this connected component can reach $r$ through a directed path. We call $r$ the root of this connected component.

    Indeed, if there is a vertex in the connected component with no out-edge, then it must be the root of this connected component.
    We will call this connected component an ended ESS.

    Now assume every vertex has a unique out-edge.
    Starting from an arbitrary vertex, one can move along the out-edge forever, so there must be a cycle.
    Now let $v\in V$ in the same connected component.
    Due to weak connectivity, there is an undirected path from $v$ to the cycle, denoted by $v-v_1-\cdots-v_n$. Here $v_1,\cdots, v_{n-1}$ are distinct vertices outside the cycle and $v_n$ is in the cycle.
    Since the unique out-edge of $v_n$ is in the cycle, the direction of $v_{n-1}-v_n$ must be $v_{n-1}\to v_{n}$.
    Similarly, the direction of $v_{n-2}-v_{n-1}$ must be $v_{n-1}\to v_{n-1}$. 
    We finally infer that the path from $v$ to $v_n$ is actually a directed path.
    In other words, any vertex in the connected component can reach the cycle via a directed path.
    Therefore, any vertex in the cycle can be the root.
    We will call this connected component a cycle ESS.

    In the proof of Theorem 1, $\vec{\sigma}$ is a toric polyomino so every ESS is a cycle ESS.

\subsubsection*{Step 4: Convert every ESS to a plane polyomino}
    In an ESS, every vertex has a directed path to the root.
    The path is indeed unique because every vertex has only one out-edge.
    We can recover a plane polyomino by backtracing the path from the root.
    Mathematically, we map the root to the site $(0,0)$ in a plane.
    For every vertex $v$ in the ESS, assume that $v$ can reach $r$ through $x$ right moves and $y$ down moves, then we map $v$ to the site $(-x, -y)$ in the plane.
    The resulting shape in the plane is a plane polyomino.

Fig.~\ref{fig_bridge_transformation} gives a concrete example for these steps.
We analysis the output of the bridge transformation in the following lemma.

\begin{lemma}[Output of the bridge transformation]\label{lem_output_of_bridge_transformation}
    Let $\vec{\sigma}$ be a toric polyomino with area $m$ and upper perimeter $n$.
    The output of the bridge transformation is $k$ plane polyominoes with area $m_i$ and upper perimeter $n_i~(i=1,2,\cdots, k)$, respectively.
    We have $k\leq m/L\leq L$ and 
    \begin{align}
        &m_1, \cdots, m_k\ge L\\
        &m_1+\cdots +m_k=m\\
        &n\leq n_1+\cdots +n_k\leq n+k \label{equ_perimeter_increment}
    \end{align}
\end{lemma}
\begin{proof}
    It is easy to see that the bridge transformation does not enlarge the area, so $\sum_{i=1}^k m_i=m$.
    In step 3 we prove that every ESS contains a cycle, and the size of the cycle is at least $L$, implying $m_i\ge L$.
    As a conclusion, $k\leq m/L\leq L$.

    Now we analysis the change of upper perimeter.
    Suppose in step 2, we decompose $\vec{\sigma}$ into $k$ ESSs with area $m_i$ and upper perimeter $n_i'$, respectively.
    And suppose in step 4, we convert the $i$-th ESS to a plane polyomino with area $m_i$ and upper perimeter $n_i$, respectively.

    The first observation is that $\sum_{i=1}^k n_i'=n$.
    This benefits from our construction of the directed graph in step 1: if $\sigma_{x,y}=\sigma_{x+1,y}=\uparrow$, the arrow on site $(x,y)$ is pointing to $(x+1,y)$, thus they are in the same ESS
    That is to say, we never separate vertical neighbors in step 2, so the upper perimeter will not increase.

    The other observation is $n_i'\leq n_i\leq n_i'+1$.
    Indeed, if a vertical neighbor $(x,y)$ and $(x+1,y)$ is separated in the step 4, $(x,y)$ must be the root of the ESS.
    If this increment occurs, the ESS must be a cycle ESS.
    Therefore, for every cycle ESS, the upper perimeter is increased by at most 1.
    Combining the two observation, we obtain $n\leq \sum_{i=1}^k n_i \leq n+k$.
\end{proof}

Now we are able to bound the number of toric polyominoes, $\widetilde{D}_{m,n}$, by the number of plane polyominoes, $D_{m,n}$.
\begin{lemma}\label{lem_toric_and_plane}
    \begin{equation}
        \widetilde{D}_{m,n}\leq \sum_{k=1}^{[m/L]}\sum_{c=0}^{k}\sum_{\substack{m_1,\cdots,m_k\ge L,\\m_1+\cdots+m_k=m,\\n_1+\cdots n_k=n+c}}\prod_{i=1}^{k}(L^2D_{m_i,n_i})
    \end{equation}
\end{lemma}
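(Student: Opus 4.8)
The plan is to read the bridge transformation as an almost-invertible encoding of a toric polyomino and then simply count the possible encodings. First I would make the transformation into a genuine function by committing to a canonical choice of root in Step~3: since every vertex of a connected component has exactly one out-edge, the functional graph of each component has a \emph{unique} directed cycle, so I can take the root to be the lexicographically smallest vertex of that cycle. With this rule, to every toric polyomino $\vec\sigma$ of area $m$ and upper perimeter $n$ the transformation assigns a definite list of $k$ plane polyominoes together with the torus coordinates $(x_r^{(i)}, y_r^{(i)})\in V_L$ at which the $i$-th root sits. Call this augmented map $\Phi$; I will bound $\widetilde D_{m,n}$ by the size of the image of $\Phi$.

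Next I would establish injectivity of $\Phi$ by exhibiting a reconstruction (left inverse). Given the $i$-th plane polyomino, rooted at the origin, and its recorded root position $(x_r^{(i)}, y_r^{(i)})$, I place it back on the torus by sending each plane site $(-x,-y)$ to the torus site $((x_r^{(i)}-x)\%L,\,(y_r^{(i)}-y)\%L)$, and then take the union of the resulting subsets of $\ZZ_L\times\ZZ_L$ over $i=1,\dots,k$. Because Lemma~\ref{lem_output_of_bridge_transformation} guarantees that Step~4 preserves area, the displacement map from each toric ESS to its plane polyomino is a bijection, so the placement recovers the $i$-th ESS exactly, and the union recovers $\vec\sigma$. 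Distinct toric polyominoes therefore yield distinct encodings, so $\widetilde D_{m,n}$ is at most the number of encodings in the image of $\Phi$.

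Then I would count the encodings using the constraints of Lemma~\ref{lem_output_of_bridge_transformation}: the component count obeys $1\le k\le [m/L]$, the areas obey $m_i\ge L$ with $\sum_i m_i=m$, and the plane upper perimeters obey $\sum_i n_i=n+c$ for some $0\le c\le k$. For a fixed admissible choice of $(k,c,\{m_i\},\{n_i\})$, each component independently contributes a factor $D_{m_i,n_i}$ for the plane polyomino shape and a factor $L^2=|V_L|$ for the root position, giving $\prod_{i=1}^k(L^2 D_{m_i,n_i})$. Summing over all admissible parameters — and observing that treating the components as an ordered tuple can only overcount, which is harmless for an upper bound — reproduces exactly the claimed inequality.

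The step I expect to be the main obstacle is the injectivity/reconstruction argument: I must check that reinserting the plane polyominoes by modular arithmetic produces no spurious collisions and genuinely reassembles the original connected components. This hinges on two facts already available in the excerpt, namely that Step~4 preserves area (so the displacement encoding of each ESS loses no site) and that Step~1 never separates vertical up-neighbors (so the reassembled pieces coincide with the connected components of $\vec\sigma$); once these are in hand, the placement is forced and the remaining counting is routine.
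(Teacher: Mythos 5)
Your proof is correct and follows essentially the same route as the paper's: both bound $\widetilde{D}_{m,n}$ by counting, via Lemma~\ref{lem_output_of_bridge_transformation}, the possible (shape, root-location) data produced by the bridge transformation, with a factor $D_{m_i,n_i}$ per plane-polyomino shape and $L^2$ per root location. The only difference is that you spell out the injectivity/reconstruction step (canonical root choice plus modular re-placement of each plane polyomino on the torus) that the paper leaves implicit in the phrase ``the number of possibilities is at most $\prod_{i=1}^k(L^2 D_{m_i,n_i})$''.
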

\begin{proof}
     This is a directed corollary of lemma~\ref{lem_output_of_bridge_transformation}. 
    For fixed $k$ and $(m_i,n_i)$'s, there are $\prod_{i=1}^k D_{m_i,n_i}$ possible shapes of the $k$ plane polyominoes output by the bridge transformation.
    Each plane polyomino has $L^2$ possible locations before step 4.
    So the number of possibilities is at most $\prod_{i=1}^k (L^2D_{m_i,n_i})$.
\end{proof}

\subsection{Reflection Trick}
    In \eqref{partition_upperbound}, we have seen that $Z\leq 1+\sum_{m,n}\widetilde{D}_{m,n}q_a^mq_u^{n}=1+\sum_{m,n}\widetilde{D}_{m,n}q_a^mq_p^{2n}$.
    This inequality can be simply improved via a reflection argument.
    The intuition is that, in lemma~\ref{lem_area_and_perimeter_bound}, we use the fact $p\ge 2n$ to deduce $q_a^mq_p^p\leq q_a^mq_p^{2n}$.
    However, this step only considers the effect of vertical perimeter and ignores horizontal perimeter.
    To reduce this waste, if the upper perimeter of $\vec{\sigma}$ is larger than the left perimeter, we make the reflection of $\vec{\sigma}$ about the diagonal, i.e., construct $\vec{\sigma}'$ such that $(\vec{\sigma}')_{x,y}=\vec{\sigma}_{y,x}$.
    With the reflection, we only need to concern configuration whose upper perimeter is no more than the left perimeter, so $p\ge 4n$ and $q_a^mq_p^p\leq q_a^mq_p^{4n}$.
    Therefore,
    \begin{equation}
        Z\leq 1+2\sum_{m,n}\widetilde{D}_{m,n}q_a^m q_p^{4n}.
    \end{equation}

\subsection{Proof of the Theorem}
\label{proof of norm}
By lemma~\ref{lem_toric_and_plane},
\begin{align}
    &\sum_{m,n}\widetilde{D}_{m,n}q_a^mq_p^{4n} \nonumber\\    \leq&\sum_{m,n}\sum_{k=1}^{[m/L]}\sum_{c=0}^{k}q_p^{-4c}\sum_{\substack{m_1,\cdots,m_k\ge L,\\m_1+\cdots+m_k=m,\\n_1+\cdots n_k=n+c}}\prod_{i=1}^{k}(L^2D_{m_i,n_i}q_a^{m_i}q_p^{4n_i})\nonumber\\
    \leq& \sum_{k=1}^{L}\sum_{c=0}^{k}q_p^{-4c}\sum_{m,n}\sum_{\substack{m_1,\cdots,m_k\ge L,\\m_1+\cdots+m_k=m,\\n_1+\cdots n_k=n+c}}\prod_{i=1}^{k}(L^2D_{m_i,n_i}q_a^{m_i}q_p^{4n_i})\nonumber\\
    \leq& \sum_{k=1}^{L}\frac{q_p^{-4k}}{1-q_p^4}\left(L^2\sum_{m\ge L,n}D_{m,n} q_a^mq_p^{4n}\right)^k\nonumber\\
    \leq& \frac{L}{1-q_p^4}\max_{k\leq L}\left(L^2q_p^{-4}\sum_{m\ge L,n}D_{m,n} q_a^mq_p^{4n}\right)^k.\label{equ_bound_on_Z_1}
\end{align}
By definition, $q_a\leq 1/2, q_p^4 \leq 1/16$. So
\begin{align}
    &q_p^{-4}\sum_{m\ge L, n}D_{m,n} q_a^mq_p^{4n}\nonumber\\
    \leq& \sum_{m\ge L, n}D_{m,n} (0.5)^{m}(1/16)^{n-1}\nonumber\\
    \leq& 16(0.5/0.72)^L\sum_{m\ge L, n}D_{m,n} (0.72)^m(1/16)^n\nonumber\\
    \leq& 16(0.695)^{L}G(0.72, 1/16)\leq 27(0.695)^L.\label{equ_truncation_decay_with_L}
\end{align}
For $L\ge 30$, $27L^2(0.695)^L<1$, so
\begin{equation}
    \sum_{m,n}\widetilde{D}_{m,n}q_a^mq_p^{4n}\leq \frac{27L^3}{1-q_p^4}(0.695)^L<c(0.7)^L
\end{equation}
for some large constant $c$.
We can make $c$ large enough such that the right hand side of \eqref{equ_bound_on_Z_1} is less than $c(0.7)^L$ for $L<30$ as well.
Hence 
$$1\leq Z<1+c_1(0.7)^L$$
for a universal constant $c_1$.
Theorem 1 is proved.

\section{Details in the Proof of Theorem~\ref{theorem:entropy}}
\label{proof_theorem2}
\subsection{Outline}
In this section we bound
\begin{equation}
    Z_p=\sum_{\vec{\sigma}}\mathcal{B}[\vec{\sigma}],
\end{equation}
where $\mathcal{B}[\vec{\sigma}]$ is
\begin{small}
\begin{equation*}
    \prod_{(x, y)\not\in A}f(\sigma_{x, y}, \sigma_{x, y+1}, \sigma_{x+1, y})\prod_{(x, y)\in A}f(-\sigma_{x, y}, -\sigma_{x, y+1}, -\sigma_{x+1, y}).
\end{equation*}
\end{small}

Recall the definition of $f$,
\begin{equation}\label{equ_entanglement_single_site}
    f(x, y, z)\leq q_a^{\mathcal{I}(x=\uparrow)}q_p^{\mathcal{I}(x\neq y)+\mathcal{I}(x\neq z)},
\end{equation}
where $\mathcal{I}(P)$ is the indicator function of an event $P$.
Multiplying~\eqref{equ_entanglement_single_site} for all $(\sigma_{x,y}, \sigma_{x, y+1}, \sigma_{x+1,y})$, we can bound $\mathcal{B}[\vec{\sigma}]$ by its twisted area and perimeter:
\begin{lemma}\label{lem_upper_bound_of_B}
    Given a configuration $\vec{\sigma}$.
    Define its twisted area $m$ as the number of $\uparrow$ outside $A$ plus the number of $\downarrow$ inside $A$.
    Define its perimeter $p$ and upper perimeter $n$ as in the main text.
    Then 
    \begin{equation}\label{equ_bound_for_B}
        \mathcal{B}[\vec{\sigma}]\leq q_a^mq_p^{p}\leq q_a^mq_p^{2n}.
    \end{equation}
\end{lemma}

The upper bound for $\mathcal{B}$ is similar to $\mathcal{A}$.
Indeed, for most configurations, the effect of region $A$ is relatively insignificant, so we can utilize theorem~\ref{theorem:normalized} to show that their contributions to $Z_p$ are negligible.

$Z_p$ is mainly contributed by those small configurations around region $A$ (by small we mean that the twisted area and the perimeter are small).
Two typical examples are $\vec{\sigma}_0$ and $\vec{\sigma}_1$ defined in the main text. $\mathcal{B}[\vec{\sigma}_0]\approx D^{-4l}$ and $\mathcal{B}[\vec{\sigma}_1]\approx d^{-l^2}$ exhibit the boundary effect and the bulk effect, respectively.
We will prove that every configuration with large $\mathcal{B}$ value must resemble $\vec{\sigma}_0$ or $\vec{\sigma}_1$ and the number of such configurations are not too large.
As a conclusion, $Z_p$ is upper bound by $\mathcal{B}[\vec{\sigma}_0]+\mathcal{B}[\vec{\sigma}_1]$ with a modest factor. 
This gives the upper bound in Theorem~\ref{theorem:entropy} and Theorem~\ref{theorem:maximum_entropy}.

On the other hand, we explicitly construct a family of configurations by modifying $\vec{\sigma}_0$ and $\vec{\sigma}_1$ slightly.
The $\mathcal{B}$ value of these configurations are similar to $\mathcal{B}[\vec{\sigma}_0]$ and $\mathcal{B}[\vec{\sigma}_1]$.
Adding up all these $\mathcal{B}$ value, we obtain the lower bound of $Z_p$ of order $\mathcal{B}[\vec{\sigma}_0]+\mathcal{B}[\vec{\sigma}_1]$.

We clarify some notations and assumptions here.
Let $\Sigma^A$ be the set of configurations with nonzero $\mathcal{B}$-values.
Let $E_{m,n}$ be the number of configurations in $\Sigma^A$ with area $m$ and upper perimeter $n$.
Let $E(x, y)=\sum_{m,n}E_{m,n}x^my^n, x, y\in (0, 1]$.
By Lemma~\ref{lem_upper_bound_of_B} and the reflection trick, $$Z_p\leq 2E(q_a, q_p^4)=2\sum_{m,n}E_{m,n}q_a^mq_p^{4n}.$$
We say a configuration is large (small, resp.) if its twisted area is at least (less than, resp.) $L/2$.
Denote $Z_{p, \text{large}}=\sum_{\vec{\sigma}\in \Sigma^A, \text{large}}\mathcal{B}[\vec{\sigma}]$, $Z_{p, \text{small}}=\sum_{\vec{\sigma}\in \Sigma^A, \text{small}}\mathcal{B}[\vec{\sigma}]$.
Let $\Sigma^A_{\text{small}}$ be the set of small configurations in $\Sigma^A$.

Let $\partial^{l}A=\{(x, y)\not\in A:(x, y+1)\in A\}$ be the left boundary of region $A$, $\partial^{t}A=\{(x, y)\not\in A:(x+1, y)\in A\}$ be the top boundary, $\partial^{b}A=\{(x, y)\in A:(x, y+1)\not\in A\}$ be the bottom boundary, and $\partial^{r}A=\{(x, y)\in A: (x, y+1)\not\in A, (x+1, y)\in A\}$ be the right boundary.

We assume $d, D\ge 2$ and $L$ is sufficiently large (larger than a fixed function $h(d, D, l)$, $h$ will be specified in many places in the proof).

\subsection{Configurations with Large Twisted Area}
We first settle down to large configurations, where a configuration is large if and only if its twisted area is at least $L/2$.
As the twisted area is large compared to region $A$, the effect of $A$ is insignificant, so the proof is similar to the Theorem 1.

We further eliminate the details in $A$ by change every site in $A$ to $\uparrow$.
Formally, for a large configuration $\vec{\sigma}$ with twisted area $m$ and upper perimeter $n$, let $\vec{\sigma}'$ be the configuration such that $\sigma'_{x,y}=\uparrow$ for $(x, y)\in A$ and $\sigma'_{x,y}=\sigma_{x, y}$ for $(x, y)\not\in A$.
$\sigma'_{x,y}$ is almost a toric polyomino except that the site $(L-1, L-1)$ may not have a right or lower $\uparrow$-neighbor.
Denote the area (not twisted area!) and upper perimeter of $\vec{\sigma}'$ by $m', n'$, respectively.
Applying the bridge transformation to $\vec{\sigma}'$, we obtain $k$ ESSs with areas and upper perimeters $(m_i, n_i), i=1,2,\cdots, k$, respectively.
Assume the first ESS contains $(L-1, L-1)$.
Other ESSs must be cycle ESSs the areas are at least $L$.
In the similar fashion of lemma~\ref{lem_output_of_bridge_transformation}, we deduce:
\begin{itemize}
    \item $0\leq m'-m\leq l^2, -l^2\leq n'-n\leq l$;
    \item $m_2, m_3, \cdots, m_k\ge L$, $m_1+\cdots + m_k=m'$, $k\leq L$;
    \item $0\leq n_1+\cdots + n_k-n'\leq k$.
\end{itemize}
So 
\begin{equation}
    E_{m,n}\leq 2^{l^2}\sum_{k=1}^{L}\sum_{\substack{0\leq a\leq l^2\\-l^2\leq b\leq l}}\sum_{c=0}^{k}\sum_{\substack{m_2,\cdots,m_k\ge L,\\m_1+\cdots+m_k=m+a,\\n_1+\cdots n_k=n+b+c}}\prod_{i=1}^k(L^2D_{m_i,n_i}),
\end{equation}
Here the factor $2^{l^2}$ comes from the fact that a $\vec{\sigma}'$ may correspond to $2^{l^2}$ different configurations $\vec{\sigma}$ in $\Sigma^A$.
With the same process of~\eqref{equ_bound_on_Z_1}, we obtain
\begin{align}
    &E(x, y)=\sum_{m,n}E_{m,n}x^my^n\nonumber\\
    \leq& 2^{l^2}\sum_{k=1}^L\sum_{a=0}^{l^2}\sum_{b=-l^2}^{l}\sum_{c=0}^k x^{-a}y^{-b-c}\left(L^2\sum_{m\ge L, n}D_{m,n}x^my^n\right)^{k-1}\nonumber\\
    &\times (L^2G(x,y))\nonumber\\
    \leq& 4L^4 l^42^{l^2}x^{-l^2}y^{-l}G(x, y)\max_{k}\left(L^2\sum_{m\ge L, n}D_{m,n}x^my^n\right)^{k-1}.\label{equ_upper_bound_for_E}
\end{align}
Let $\lambda=1.2$ be a constant larger than 1, $G(\lambda^2q_a, q_p^4)$ converges to a finite value.
From the same reason of~\eqref{equ_truncation_decay_with_L}, $L^2\sum_{m\ge L, n}D_{m,n}(\lambda q_a)^m(q_p^4)^n$ is upper bounded by $c\lambda^{-L}$ for some constant $c$.
Therefore the maximum in~\eqref{equ_upper_bound_for_E} is bounded by some constant $c'$, and $E(\lambda q_a, q_p^4)$ is bounded by $L^4$ multiplied with some function $h_0(l, d, D)$.
Hence $Z_{p, \text{large}}\leq 2\sum_{m\ge L/2, n}q_a^{m}q_p^{4n}\leq 2\lambda^{-L/2}E(\lambda q_a, q_p^4)\leq 2\lambda^{-L/2}L^4 h_0(l, D, m)$.
For sufficiently large $L$, $Z_{p, \text{large}}<q_p^{4l}$, which is negligible compared with $Z_{p, \text{small}}$.

\subsection{Configurations with Small Twisted Area}
Now we focus on small configurations, i.e., configurations with twisted area less than $L/2$.
The details in $A$ of configurations become important.
In fact, we will show that the contribution of sites that are outside $A$ is small.

Let $\Sigma^{A}_0=\{\vec{\tau}\in \Sigma^{A}:\forall (x, y)\not\in A,\tau_{x,y}=\downarrow\}$ be the set of configurations that are trivial outside $A$.
For $\vec{\tau}\in \Sigma^A_0$, denote $\Sigma^A_{\tau}=\{\vec{\sigma}\in \Sigma^A_\text{small}: \sigma_{x,y}=\tau_{x,y},\forall (x, y)\in A\}$.

We now study the sites outside $A$.
All $\uparrow$ outside $A$ constitute multiple ESSs rooted at the left boundary or the top boundary.
Apply the bridge transformation to all $\uparrow$ that can arrive the left boundary via right and down movement.
The results are $k$ ESSs with area and upper perimeter $m_i, n_i$, respectively.
The bridge transformation will not increase the total upper boundary (see~\eqref{equ_perimeter_increment} in the lemma~\ref{lem_output_of_bridge_transformation}, here the configuration is small, so there will be no increment due to cycle ESSs).
Notice that the vertical boundary of these ESSs do not overlap with the boundary of $\vec{\sigma}^A$.
Therefore, these $k$ ESSs contribute an extra factor $\prod_{i}q_a^{m_i}q_p^{2n_i}$.
Adding up all possibilities of $k$, positions of roots, areas and upper perimeters of ESSs, we obtain 
\begin{align}
    &\sum_{k}\binom{l}{k}\sum_{m_i,n_i}\prod_{i}D_{m_i,n_i}q_a^{m_i}q_p^{2n_i}\nonumber\\
    =&\sum_k\binom{l}{k}G(q_a, q_p^2)^k\nonumber\\
    =&(1+G(q_a, q_p^2))^l.
\end{align}
This is the extra decay contributed by all possible ESSs rooted at the left boundary of $A$.

We do the same calculation for ESSs rooted at the upper boundary of $A$.
But at this time we should consider the horizontal boundary of these ESSs, so we need to change the priority of right and down in the step 1 of the bridge transformation (that is, connecting to the right $\uparrow$-neighbor is now prior to the down $\uparrow$-neighbor).
Other calculations are the same.
So far, what we have proved is
\begin{equation}
    \sum_{\vec{\sigma}\in \Sigma_\tau}\mathcal{B}[\vec{\sigma}]\leq (1+G(q_a, q_p^2))^{2l}\mathcal{B}[\vec{\tau}].\label{equ_outside_A_site_effect}
\end{equation}
Let $Z_{p, \text{inner}}=\sum_{\vec{\sigma}\in \Sigma^{A}_0}\mathcal{B}[\vec{\sigma}]$.
Summing~\eqref{equ_outside_A_site_effect} over all $\vec{\tau}\in \Sigma^A_0$, we obtain that $Z_{p, \text{small}}\leq (1+G(q_a, q_p^2))^{2l}Z_{p,\text{inner}}$.
Therefore, we only need to bound the contributions of configurations that are trivial outside $A$.

\subsection{Upper Bound of Theorem~\ref{theorem:entropy}}
Now we are ready to prove the upper bound of Theorem~\ref{theorem:entropy}.
By~\eqref{equ_outside_A_site_effect}, we only need to consider $\vec{\tau}\in \Sigma^A_0$.
Let $E_{m,n}^0$ be the number of configurations in $\Sigma_{0}^A$ with area $m$ and upper perimeter $n$.
By the reflection trick, $Z_{p, \text{inner}}\leq 2\sum_{m,n}E_{m,n}^0q_a^mq_p^{4n}$.

Let $\vec{\sigma}\in \Sigma^A_0$ be a configuration with twisted area $m$ and upper perimeter $n$.
Since $\uparrow$ can appear in at most $n$ columns and each columns has at most $l$ $\uparrow$, we obtain a simple relation $l^2\ge m\ge l^2-nl$.

For $(x, y)\in A$ and $\sigma_{x, y}=\downarrow$, one of $\sigma_{x+1, y}, \sigma_{x, y+1}$ must be $\downarrow$ (otherwise $f(-\sigma_{x, y}, -\sigma_{x, y+1}, -\sigma_{x+1, y})=0$).
All $\downarrow$ in $A$ constitute several ESSs rooted at the right and bottom boundary of $A$.
We extend the region $A$ by one site towards the bottom and right boundary of $A$.
Specifically, suppose $A=[a, a+l-1]_\ZZ\times [b, b+l-1]_\ZZ$, where $I_\ZZ:=I\cap \ZZ$ for an interval $I$.
The extended region is $A^{\text{ext}}:=[a, a+l]_\ZZ\times [b, b+l]_\ZZ$.
Then all $\downarrow$ in $A^{\text{ext}}$ constitute exactly one ESS rooted at $(a+l, b+l)$.
The ESS has area $m+2l+1$ and upper perimeter $n'$ and can be directly converted to a plain polyomino via step 4 of the bridge transformation.

We claim that that $n'=n+a$, where $a\in [1, l+1]_\ZZ$ is the number of $\downarrow$ in the first row of $A^{ext}$.
The vertical perimeter of $\vec{\tau}$ and the ESS is $2n, 2n'$, respectively.
Then,
\begin{itemize}
    \item The lower boundary of $\downarrow$ in the last row of $A^{ext}$ is counted in $2n'$ but not in $2n$;
    \item The upper boundary of $\downarrow$ in the first row is counted in $2n'$ but not in $2n$;
    \item The upper boundary of $\uparrow$ in the first row is counted in $2n$ but not in $2n'$.
\end{itemize}
So $2n'-2n = (l+1)+a-(l+1-a)=2a$ and $n'=n+a$.
Therefore, $E_{m,n}^0\leq \sum_{a=1}^{l+1}D_{m+2l+1, n+a}$ and
\begin{align}
    Z_{p,\text{inner}}\leq 2\sum_{m\ge l^2-nl}\sum_{a=1}^{l+1}D_{m+2l+1, n+a}q_a^m q_p^{4n}.\label{equ_upper_bound_of_Z_p_0}
\end{align}
The generating function $G(x, y)$ gives us ample information about $D_{m,n}$:
\begin{lemma}
    There exists a constant $c$ such that for any $m, n\ge 1$, $D_{m,n}\leq c(1.9)^{m+2n}$
\end{lemma}
\begin{proof}
    Regard $G(t, t^2)=\frac{t}{2}(\sqrt{\frac{(1+t)(1+t-t^3)}{(1-2t+t^2-t^3-t^4)}}-1)=\sum_{m,n}D_{m,n}t^{m+2n}$ as a power series of $t$.
    The coefficient of $t^k$ is $g_k=\sum_{m+2n=k}D_{m,n}$.
    The radius of convergence is $r\approx 0.5271$ (the positive real zero root of $1-2t+t^2-t^3-t^4$).
    A well-known result in analysis states that, $\lim\sup_{k\to\infty} |g_k|^{1/k}=\frac{1}{r}<1.898$.
    So there exists a constant $c$ such that $g_k\leq c(1.9)^k$.
    Hence $D_{m,n}\leq g_{m+2n}\leq c(1.9)^{m+2n}$.
\end{proof}
Now we can easily bound the right hand side of~\eqref{equ_upper_bound_of_Z_p_0}:
\begin{align}
    &Z_{p, \text{inner}}\nonumber\\
    \leq& 2c\sum_{l^2\ge m\ge l^2-nl}\sum_{a=1}^{l+1}(1.9)^{m+2n+2l+1+2a}q_a^mq_p^{4n}\nonumber\\
    =&c'(1.9)^{4l}\sum_{l^2\ge m\ge l^2-nl}(1.9q_a)^m(3.61q_p^4)^n\nonumber\\
    =&c'(1.9)^{4l}\sum_{n\ge l, m\ge 0}(1.9q_a)^m(3.61q_p^4)^n\nonumber\\
    &\quad+c'(1.9)^{4l}\sum_{n=0}^l\sum_{m=l^2-nl}^{l^2}(1.9q_a)^m(3.61q_p^4)^n\nonumber\\
    \leq& c''(1.9)^{4l}\left((3.61q_p^4)^{l}+\sum_{n=0}^{l}(1.9q_a)^{l^2-nl}(3.61q_p^4)^n\right)\label{equ_upper_bound_for_thm_3}\\
    \leq& c''(1.9)^{4l}\left((3.61q_p^4)^{l}+(l+1)\max\{(1.9q_a)^{l^2}, (3.61q_p^4)^{l}\}\right). \label{equ_upper_bound_for_p_inner}
\end{align}
Assume $d\ge 2, D\ge 3$, then $q_a\leq 1/2, q_p\leq 1/3$ and $1.9q_a<1, 3.61q_p^4<1$, so all geometric series converges. $c, c', c''$ absorb all constants.
Therefore, for some universal constant $c$ (different from the previous $c$),
\begin{align}
    Z_p&=Z_{p, \text{large}}+Z_{p, \text{small}}\nonumber\\
    &\leq q_p^{4l}+(1+G(q_a, q_p^2))^{2l}Z_{p, \text{inner}}\nonumber\\
    &\leq c(1.91q_a)^{l^2}+c(2.9q_p)^{4l},
\end{align}
where $G(q_a, q_p^2)\leq G(1/2, 1/9)<0.145$. For this upper bound to be meaningful, the right hand side should be less than 1.
Take $l$ such that $(1.91/2)^{l^2}, (2.9/3)^{4l}<\frac{1}{2c}$.
A sufficient condition is $l\ge \log_{1.05}(2c)$, i.e., $l$ is larger than a constant.

\subsection{Lower Bound of Theorem~\ref{theorem:entropy}}
By slightly modifying $\vec{\sigma}_0, \vec{\sigma}_1$, we construct a family of configurations with similar $\mathcal{B}$ value, which provides a lower bound of $Z_p$.
Here we still assume $A$ is the region $[L-l, L)_\mathbb{Z}\times [L-l, L)_\mathbb{Z}$.
A caveat is that Eq.~\eqref{equ_entanglement_single_site} is only an upper bound instead of an equation when $(x,y,z)=(\downarrow, \uparrow, \uparrow)$, $(\uparrow, \downarrow, \uparrow)$, or $(\uparrow, \uparrow, \downarrow)$.
To fill this gap, we define $\gamma_1=f(\downarrow,\uparrow, \uparrow)/q_p^2, \gamma_2=f(\uparrow, \downarrow, \uparrow)/(q_aq_p)=f(\uparrow, \uparrow, \downarrow)/(q_aq_p)$.
Note that $\gamma_1,\gamma_2$ are less than but close to 1.
In fact, one can prove that $\gamma_1,\gamma_2\ge 0.74.$

\subsubsection*{Modifying $\vec{\sigma}_0$}
By definition, in $\vec{\sigma}_0$ all spins inside $A$ are $\uparrow$ and all spins outside $A$ are $\downarrow$.
$\mathcal{B}[\vec{\sigma}_0]=f(\downarrow, \downarrow, \uparrow)^{4l-2}f(\downarrow, \uparrow, \uparrow)= q_p^{4l}\gamma_1$.
We flip $k$ spins in the boundary.
The twisted area becomes $k$ and the perimeter is at most $4l+2k$.
The number of $\gamma_1, \gamma_2$ are at most $k+1, k$, respectively.
Hence we obtain one part of the lower bound:
\begin{align}
    \sum_{k=0}^{4l-1}\binom{4l-1}{k}q_a^kq_p^{4l+2k}\gamma_1^k\gamma_2^{k+1}=q_p^{4l}\gamma_2(1+q_aq_p^2\gamma_1\gamma_2)^{4l-1}.
\end{align}

\subsubsection*{Modifying $\vec{\sigma}_1$}
In $\vec{\sigma}_1$ all spins are $\downarrow$.
$\mathcal{B}[\vec{\sigma}_0]=f(\uparrow, \uparrow, \uparrow)^{l^2}=q_a^{l^2}$.
We flip $k$ spins inside $A$.
To make the $\mathcal{B}$-value of the resulting configuration nonzero, we require that the $k$ spins are in the odd row of $A$ but not in the bottom or right boundary of $A$. 
The number of such positions is $l(l-1)/2$ for even $l$ and $(l-1)^2/2$ for odd $l$.
The twisted area becomes $l^2-k$ and the perimeter is at most $4k$.
The number of $\gamma_1, \gamma_2$ are at most $k$.

So we obtain the other part of the lower bound:
\begin{align}
    \sum_{k=0}^{(l-1)[l/2]}\binom{(l-1)[l/2]}{k}q_a^{l^2-k}q_p^{4k}\gamma_1^k\gamma_2^k=q_a^{l^2}(1+q_a^{-1}q_p^4\gamma_1\gamma_2)^{(l-1)[l/2]}.
\end{align}
To avoid the annoying floor function, we use the inequality $(l-1)[l/2]\ge (l-1)^2/2$.
Substituting $\gamma_2\ge 0.7$ and $\gamma_1\gamma_2\ge 0.5$, we obtain the lower bound of Theorem~\ref{theorem:entropy}.

\subsection{Proof of Theorem~\ref{theorem:maximum_entropy}}
\label{theorem3_proof}
In this section we prove that $Z_p\leq q_a^{l^2}+O(q_p^4)$ if $D\ge 3$,  indicating that the entanglement entropy is close to maximum entropy for large bond dimensions.

We only need to modify the proof of Theorem~\ref{theorem:entropy} slightly.
Notice that in $\Sigma^A_0$, there is only one configuration with $n=0$, i.e., the typical configuration $\vec{\sigma}_1$.

For other configurations, $n\ge 1$, so~\eqref{equ_upper_bound_for_thm_3} becomes
\begin{align}
    &c''(1.9)^{4l}\left((3.61q_p^4)^{l}+\sum_{n=1}^{l}(1.9q_a)^{l^2-nl}(3.61q_p^4)^n\right)\nonumber\\
    \leq& c''(1.9)^{4l}\left((3.61q_p^4)^{l}+l\max\{(1.9q_a)^{l^2-l}(3.61q_p^4), (3.61q_p^4)^{l}\}\right).
\end{align}
Multiplying $(1+G(q_a, q_p^2))^{2l}<(1.45)^{2l}$, we obtain the upper bound of contributions of all small configurations that have at least one $\uparrow$ in $A$: $c(1.91q_a)^{l^2-l}(3.61q_p^4)+c(2.9q_p)^{4l}=O(q_p^4)$.

The remaining part is small configurations that have no $\uparrow$ in A, i.e., configurations in $\Sigma^A_{\vec{\sigma}_1}$.
Suppose $\vec{\sigma}\in \Sigma^A_{\vec{\sigma}_1}$ with $m$ $\uparrow$ outside $A$ (thus the twisted area is $m+l^2$) and suppose the upper perimeter is $n$
All $\uparrow$ outside $A$ constitute several (at most $2l$) ESSs rooted at the left and top boundary.
Use the standard bridge transformation to decompose them and convert them into plane polyominoes with area and upper perimeter $(m_i,n_i)_{1\leq i\leq k}$, respectively.
With the reflection trick, we obtain an upper bound
\begin{align}
    \sum_{\vec{\sigma}\in \Sigma^A_{\vec{\sigma}_1}}\mathcal{B}[\vec{\sigma}]&\leq q_a^{l^2}\sum_{k=0}^{2l}\binom{2l}{k}\sum_{m_i,n_i\ge 0}\prod_{i=1}^{k}(D_{m_i, n_i}q_a^{m_i}q_p^{n_i})\nonumber\\
    &= q_a^{l^2}(1+G(q_a, q_p^4))^{2l}.
\end{align}
Here $G(q_a, y)\leq G(1/2, y)=\frac{y}{2}(\sqrt{\frac{9-3y}{1-3y}}-1)\leq 5y$ for $y=q_p^4\leq 1/81$.
So
\begin{align}
    \sum_{\vec{\sigma}\in \Sigma^A_{\vec{\sigma}_1}}\mathcal{B}[\vec{\sigma}]&\leq q_a^{l^2}(1+G(q_a, q_p^4))^{2l}\nonumber\\
    &\leq q_a^{l^2}(1+5q_p^4)^{2l}\nonumber\\
    &=q_a^{l^2}+q_a^{l^2}q_p^4\sum_{i=1}^{2l}\binom{2l}{i}5^iq_p^{4(i-1)}\nonumber\\
    &<q_a^{l^2}+q_a^{l^2}q_p^4\sum_{i=1}^{2l}\binom{2l}{i}5^{2l}\nonumber\\
    &<q_a^{l^2}+q_p^4 q_a^{l^2}10^{2l}=q_a^{l^2}+O(q_p^4).
\end{align}

In summary, $Z_{p}\leq Z_{p, \text{large}}+Z_{p, \text{small}}\leq q_p^{4l}+O(q_p^4)+q_a^{l^2}+O(q_p^4)=q_a^{l^2}+O(q_p^{4})\leq d^{-l^2}+O(D^{-4}).$
Obviously $Z_{p}\ge d^{-l^2}$ since the right hand side is the maximum entropy.
So $d^{-l^2}\leq E[\text{Tr}(\rho_A^2)]\leq d^{-l^2}+O(D^{-4})$.
By Markov inequality, $\text{Pr}\left[ \left|\text{Tr}(\rho_A^2)-d^{-|A|} \right| \geq \epsilon \right] \leq O(\frac{D^{-4}}{\epsilon})$.

\bibliography{MPS, reference, reference_Deng}

\end{document}